\documentclass[runningheads,envcountsame]{llncs}
\usepackage[hscale=.66, vscale=.82]{geometry}
\usepackage{mysty}

\usepackage{thmtools,thm-restate}
                    
\begin{document}

\title{Automata on Graph Alphabets}

\author{%
  Hugo Bazille\inst1 \and
  Uli Fahrenberg\inst2
}

\institute{%
  EPITA Research Lab (LRE), France \and
  LMF, Université Paris-Saclay
}

\maketitle

\begin{abstract}
The theory of finite automata concerns itself with words in a free monoid together with concatenation and without further structure. There are, however, important applications which use alphabets which are structured in some sense.

We introduce automata over a particular type of structured data, namely an alphabet which is given as a (finite or infinite) directed graph. This constrains concatenation: two strings may only be concatenated if the end vertex of the first is equal to the start vertex of the second.

We develop the beginnings of an automata theory for languages on graph alphabets. We show that they admit a Kleene theorem, relating rational and regular languages, and a Myhill-Nerode theorem, stating that languages are regular iff they have finite prefix or, equivalently, suffix quotient.
We present determinization and minimization algorithms, but we also exhibit that regular languages are not stable by complementation.

Finally, we mention how these structures could be generalized to presimplicial alphabets, where languages are no more freely generated.

  \begin{keywords}
    Regular language,
    Automata theory,
    Constrained alphabet,
    Kleene theorem,
    Myhill-Nerode theorem
  \end{keywords}
\end{abstract}


\newpage

\section{Introduction and Motivation}

The theory of finite automata concerns itself with words over an alphabet $\Sigma$,
that is, the free monoid on $\Sigma$ together with concatenation and without further structure.
There are, however, important applications which use alphabets which are structured in some sense,
for example in database theory~\cite{DBLP:conf/dagstuhl/2005P5061},
model checking of systems with data~\cite{DBLP:journals/siglog/DemriQ21, DBLP:conf/mfcs/PetelerQ22},
type checking~\cite{DBLP:conf/padl/Newton25},
or concurrency theory~\cite{DBLP:journals/tcs/AmraneBFZ25}.

In this paper we work with a particular type of structured data,
namely an alphabet which is given as a (finite or infinite) directed graph.
This constrains concatenation:
two strings may only be concatenated if the end vertex of the first
is equal to the start vertex of the second.
We give a few examples.

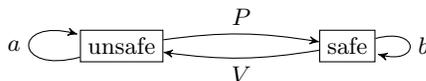
\begin{figure}[bp]
  \centering
  \begin{tikzpicture}[x=2cm]
    \node[state, rectangle, inner sep=1mm] (s) at (0,0) {unsafe};
    \node[state, rectangle, inner sep=1mm] (u) at (1.5,0) {safe};
    \path[loop left] (s) edge node {$a$} (s);
    \path[loop right] (u) edge node {$b$} (u);
    \path[bend left=3mm] (s) edge node {$P$} (u);
    \path[bend left=3mm] (u) edge node {$V$} (s);
  \end{tikzpicture}
  \caption{Alphabet for locking and releasing a resource.}
  \label{fig:lock}
\end{figure}

In a concurrent setting, processes' behavior depends on whether or not they have access to shared resources.
Figure~\ref{fig:lock} shows a simple graph alphabet designed to reflect this,
where the action available depends on the state:
in unsafe state, processes may do $a$,
but after locking the resource ($P$), they can do $b$,
until the resource is released ($V$).
Note that the graph in Fig.~\ref{fig:lock} only constrains the \emph{alphabet};
processes themselves may have arbitrarily complex behavior,
but they are by design restricted to not do $b$ while in unsafe state.

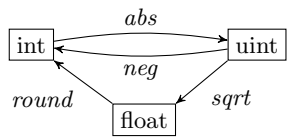
\begin{figure}[bp]
  \centering
  \begin{tikzpicture}[x=1.5cm]
    \node[state, rectangle, inner sep=1mm] (int) at (0,0) {int};
    \node[state, rectangle, inner sep=1mm] (uint) at (2,0) {uint};
    \node[state, rectangle, inner sep=1mm] (float) at (1,-1) {float};
    \path[bend left=3mm] (int) edge node {\textit{abs}} (uint);
    \path[bend left=3mm] (uint) edge node {\textit{neg}} (int);
    \path (uint.south west) edge node {\textit{sqrt}} (float.north east);
    \path (float.north west) edge node {\textit{round}} (int.south east);
  \end{tikzpicture}
  \caption{Some function types.}
  \label{fig:types}
\end{figure}

Another example is inspired by Newton~\cite{DBLP:conf/padl/Newton25}
who uses (symbolic) automata for checking sequences of types.
When integrating function types into this setting, it is necessary to restrict their appearance in sequences,
see Fig.~\ref{fig:types} for a simple example.
Using automata on such graph alphabet will automatically ensure that sequences such as
$\textit{abs\/}\mathord;\textit{neg\/}\mathord;\textit{sqrt}$ do not appear.

Our final example comes from non-interleaving concurrency theory.
\cite{DBLP:conf/RelMiCS/AmraneBCFZ24, DBLP:journals/tcs/AmraneBFZ25}
introduce ST-automata, see Fig.~\ref{fig:sta} for an example.
These are state-labeled automata on an alphabet of so-called starters and terminators,
vectors which indicate events that are started ($a\ibullet$),
events that are terminated ($\ibullet a$),
and these that continue running ($\ibullet a\ibullet$).
Here, a sequence such as $\loset{b\ibullet\\a\ibullet} \loset{\ibullet c\\\ibullet a}$
would be malformed, as there is no event $c$ running which can be terminated by the second element.
ST-automata are best understood as automata on an infinite graph alphabet
where vertices denote events which are running and edges are starters and terminators;
we will get back to this example later.

\begin{figure}[tbp]
  \centering
  \begin{tikzpicture}[x=.7cm, y=.62cm]
    \begin{scope}[shift={(7,-4.3)}, x=0.9cm, y=0.9cm]
      \node[state,draw=black,fill=blue!20,inner sep=0pt,minimum size=15pt]
      (ac) at (0,4) {$\vphantom{hy}\emptyset$};
      \node[state,draw=black,fill=blue!20,inner sep=0pt,minimum size=15pt]
      (cc) at (4,4) {$\vphantom{hy}\emptyset$};
      \node[state,draw=black,fill=blue!20,inner sep=0pt,minimum size=15pt]
      (ae) at (0,8) {$\vphantom{hy}\emptyset$};
      \node[state,draw=black,fill=blue!20,inner sep=0pt,minimum size=15pt]
      (ec) at (8,4) {$\vphantom{hy}\emptyset$};
      \node[state,draw=black,fill=blue!20,inner sep=0pt,minimum size=15pt]
      (ce) at (4,8) {$\vphantom{hy}\emptyset$};
      \node[state,draw=black,fill=blue!20,inner sep=0pt,minimum size=15pt]
      (ee) at (8,8) {$\vphantom{hy}\emptyset$};
      \node[state,draw=black,fill=green!30,inner sep=0pt,minimum size=15pt]
      (bc) at (2,4) {$\vphantom{hy}b$};
      \node[state,draw=black,fill=green!30,inner sep=0pt,minimum size=15pt]
      (ad) at (0,6) {$\vphantom{hy}a$};
      \node[state,draw=black,fill=green!30,inner sep=0pt,minimum size=15pt]
      (be) at (2,8) {$\vphantom{hy}b$};
      \node[state,draw=black,fill=green!30,inner sep=0pt,minimum size=15pt]
      (cd) at (4,6) {$\vphantom{hy}a$};
      \node[state,draw=black,fill=green!30,inner sep=0pt,minimum size=15pt]
      (de) at (6,8) {$\vphantom{hy}c$};
      \node[state,draw=black,fill=green!30,inner sep=0pt,minimum size=15pt]
      (dc) at (6,4) {$\vphantom{hy}c$};
      \node[state,draw=black,fill=green!30,inner sep=0pt,minimum size=15pt]
      (ed) at (8,6) {$\vphantom{hy}a$};
      \node[state,draw=black,fill=black!20,inner sep=0pt,minimum size=15pt]
      (bd) at (2,6) {$\vphantom{hy}\loset{b \\ a}$};
      \node[state,draw=black,fill=black!20,inner sep=0pt,minimum size=15pt]
      (dd) at (6,6) {$\vphantom{hy}\loset{c \\ a}$};
      \path (ac) edge node[below] {$b\ibullet$} (bc);
      \path (bc) edge node[below] {$\ibullet b$} (cc);
      \path (ac) edge node[left] {$a \ibullet$} (ad);
      \path (ad) edge node[left] {$\ibullet a$} (ae);
      \path (bc) edge node[right] {$\loset{\ibullet b \ibullet \\ \hphantom{\ibullet} a \ibullet }$} (bd);
      \path (bd) edge node[left] {$\loset{\ibullet b \ibullet \\ \ibullet a \hphantom{\ibullet}}$} (be);
      \path (cc) edge node[below] {$c\ibullet$} (dc);
      \path (dc) edge node[below] {$\ibullet c$} (ec);
      \path (ad) edge node[above] {$\loset{\hphantom{\ibullet} b \ibullet \\ \ibullet a \ibullet}$} (bd);
      \path (bd) edge node[above] {$\loset{\ibullet b \hphantom{\ibullet}\\ \ibullet a \ibullet}$} (cd);
      \path (ae) edge node[above] {$b \ibullet$} (be);
      \path (be) edge node[above] {$\ibullet b$} (ce);
      \path (bd) edge node[above left=-0.15cm] {$\loset{\ibullet b \\ \ibullet a}$} (ce);
      \path (ac) edge node[above left=-0.15cm] {$\loset{b \ibullet \\ a \ibullet}$} (bd);
      \path (cd) edge node[above] {$\loset{\hphantom{\ibullet}c \ibullet \\ \ibullet a \ibullet}$} (dd);
      \path (dd) edge node[above] {$\loset{\ibullet c \hphantom{\ibullet}\\ \ibullet a \ibullet}$} (ed);
      \path (ce) edge node[above] {$c \ibullet $} (de);
      \path (de) edge node[above] {$\ibullet c$} (ee);
      \path (cc) edge node[left] {$a \ibullet$} (cd);
      \path (cd) edge node[left] {$\ibullet a$} (ce);
      \path (dc) edge node[right] {$\loset{\ibullet c \ibullet \\ \hphantom{\ibullet} a \ibullet}$} (dd);
      \path (dd) edge node[left] {$\loset{\ibullet c \ibullet \\   \ibullet a \hphantom{\ibullet}}$} (de);
      \path (ec) edge node[right]{$a \ibullet $} (ed);
      \path (ed) edge node[right] {$\ibullet a$} (ee);
      \path (dd) edge node[above left=-0.15cm] {$\loset{ \ibullet c \\ \ibullet a }$} (ee);
      \path (cc) edge node[above left=-0.15cm] {$\loset{c \ibullet \\ a \ibullet}$} (dd);
      \node[below left] at (ac) {$\bot\;\;$};
      \node[below right] at (ec) {$\;\;\top$};
    \end{scope}
  \end{tikzpicture}
  \caption{ST-automaton accepting $a\mathop{\|} b c$.}
  \label{fig:sta}
\end{figure}

We develop the beginnings of an automata theory for languages on graph alphabets.
We show that they admit a Kleene theorem, relating rational and regular languages,
and a Myhill-Nerode theorem, stating that languages are regular
iff they have finite prefix or, equivalently, suffix quotient
(subject to an easy condition on the graph).
Further, automata on graph alphabets are determinizable and complementable,
and minimal deterministic automata exist.
Some of the more elementary proofs of our results have been relegated to appendix.

\section{Automata on Graph Alphabet}
\label{se:graut}

A \emph{graph alphabet} $(V, \Sigma, d_0, d_1)$
consists of sets $V$ and $\Sigma$ together with source and target mappings $d_0, d_1: \Sigma\to V$.
We will generally omit $d_0$ and $d_1$ from the signature.
Both $V$ and $\Sigma$ may be finite or infinite,
as may the automata we define next.

\begin{definition}
  \label{de:aut}
  An \emph{automaton} on graph alphabet $(V, \Sigma)$
  is a structure $A=(Q, I,$ $F, E, s, t, \mu, \lambda)$
  consisting of a set of states $Q$ with initial and accepting states $I, F\subseteq Q$,
  a set of transitions $E$ with source and target mappings $s, t: E\to Q$,
  and labelings $\mu: Q\to V$, $\lambda: E\to \Sigma$.
  We require that $\mu(s(e))=d_0(\lambda(e))$ and $\mu(t(e))=d_1(\lambda(e))$ for every $e\in E$.
\end{definition}

That is, the transitions of $A$ are labeled with elements of $\Sigma$,
but in a way consistent with the graph $(V, \Sigma)$:
a transition labeled $a$ must emanate from a state labeled $d_0(a)$ and lead to a state labeled $d_1(a)$.
Motivated by the example in Fig.~\ref{fig:types},
we may thus understand the state labels as \emph{types} which restrict the application of transitions.



A \emph{path} in $A$ is an alternating sequence $\pi=(q_0, e_1, q_1,\dotsc, e_n, q_n)$ of states and transitions
such that $s(e_i)=q_{i-1}$ and $t(e_i)=q_i$ for all $i$;
we naturally expand this notation to $\src(\pi)=q_0$ and $\tgt(\pi)=q_n$.
The \emph{label} of $\pi$ is
\begin{equation*}
  \lambda(\pi)=(\mu(q_0), \lambda(e_1)\dotsm \lambda(e_n), \mu(q_n)).
\end{equation*}
We keep the labels of the start and target state of $\pi$ in $\lambda(\pi)$
as we will need typing information for concatenation.
Note that there are no empty paths in $A$.
The shortest paths are the constant paths $\pi=(q)$ for $q\in Q$;
the label of such a path $\pi$ is $\lambda(\pi)=\mu(q)$.

A path $\pi$ is \emph{accepting} if $\src(\pi)\in I$ and $\tgt(\pi)\in F$.
The \emph{language} of $A$ is $L(A)=\{\lambda(\pi)\mid \pi \text{ accepting path in } A\}$.
Its \emph{untyped language} is $L_b(A)=\{\omega \in \Sigma^*\mid \exists u, v\in V: (u, \omega, v)\in L(A)\}$,
forgetting the typing information.

\begin{proposition}
  \label{pr:bare}
  The untyped language of any finite automaton on graph alphabet is regular.
\end{proposition}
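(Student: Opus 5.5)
The plan is to forget the vertex labels and read the automaton as an ordinary nondeterministic finite automaton over the alphabet $\Sigma$. Given a finite automaton $A=(Q,I,F,E,s,t,\mu,\lambda)$ on a graph alphabet $(V,\Sigma)$, define the classical NFA $A_b=(Q,I,F,\delta)$ with
\begin{equation*}
  \delta=\{(s(e),\lambda(e),t(e))\mid e\in E\}\subseteq Q\times\Sigma\times Q .
\end{equation*}
Only the finitely many letters $\lambda(E)$ occur in $\delta$, so $A_b$ is a finite automaton over the finite alphabet $\Sigma_A=\lambda(E)$. This matters because $\Sigma$ itself may be infinite; regularity is then understood over $\Sigma_A$, or over any finite alphabet containing it.

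Next I will show $L_b(A)=L(A_b)$, where $L(A_b)$ is the classical language of $A_b$. For the inclusion from left to right, take $\omega\in L_b(A)$. Then there are an accepting path $\pi=(q_0,e_1,q_1,\dotsc,e_n,q_n)$ in $A$ and vertices $u,v$ with $\lambda(\pi)=(u,\omega,v)$, so $\omega=\lambda(e_1)\dotsm\lambda(e_n)$. The sequence $q_0\to q_1\to\dotsb\to q_n$, with the $i$-th step labelled $\lambda(e_i)$, is then a run of $A_b$ from $q_0\in I$ to $q_n\in F$ reading $\omega$. The constant path $(q_0)$ is included as the case $n=0$: it gives $\omega=\epsilon$, which $A_b$ accepts because $q_0\in I\cap F$.

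For the converse, take an accepting run of $A_b$ on $\omega$. Each of its steps comes from some transition $e_i\in E$, chosen arbitrarily if several transitions produce the same triple. These transitions assemble into a path $\pi$ in $A$ with $\src(\pi)\in I$ and $\tgt(\pi)\in F$. Its label is $(\mu(q_0),\omega,\mu(q_n))$, so $\omega\in L_b(A)$ with $u=\mu(q_0)$ and $v=\mu(q_n)$. The typing constraints of Definition~\ref{de:aut} play no role here, since we only project them away.

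By Kleene's theorem for ordinary finite automata, $L(A_b)$ is regular, and hence so is $L_b(A)$. I expect no step to be a real obstacle. The only points needing care are the possibly infinite $\Sigma$, which is handled by restricting to $\lambda(E)$, and the bookkeeping for the empty word that constant paths produce.
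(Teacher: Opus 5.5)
Your proof is correct and is essentially the paper's argument. The paper maps the automaton along the graph homomorphism $(V,\Sigma)\to(*,\Sigma)$ to the one-point graph, which is exactly your forgetting of $\mu$ to get an ordinary NFA; you additionally spell out the language equality and sensibly restrict to the finite alphabet $\lambda(E)$ when $\Sigma$ is infinite.
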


\begin{proof}
  There is a graph homomorphism from any graph alphabet $(V, \Sigma)$ to the one-point graph $(*, \Sigma)$,
  so any automaton on $(V, \Sigma)$ may be mapped to one on $(*, \Sigma)$,
  preserving the untyped language.
  A finite automaton on graph alphabet $(*, \Sigma)$ is just a standard finite automaton on $\Sigma$.
  \qed
\end{proof}

\begin{example}
  \label{ex:sta}
  The ST-automaton of Fig.~\ref{fig:sta} is a finite automaton
  on the graph alphabet of starters and terminators $\Omega(\{a, b, c\})$ on the set $\{a, b, c\}$.
  Figure~\ref{fig:stalph} depicts part of the simpler $\Omega(\{a, b\})$.
  Its vertices are words in $\{a, b\}^*$ (written vertically), hence the graph is infinite.
  Vertices denote labeled events which are active, so for example in $\loset{a\\b}$,
  two events are active, one labeled $a$ and the other $b$.
  (For technical reasons, $\loset{a\\b}\ne \loset{b\\a}$; see~\cite{DBLP:journals/tcs/AmraneBFZ25}.)
  Vertices like $\loset{a\\a}$
  denote \emph{auto-concurrency},
  \ie~several $a$-labeled events in parallel.
  Edges in the graph start or terminate events,
  so a transition like $\loset{\ibullet b\ibullet\\\pibullet a\ibullet}$ from $\loset{b}$ to $\loset{b\\a}$,
  for example, indicates that an $a$-labeled event is started while the $b$-labeled event keeps running.

  Note that the \emph{constraint automata} of
  \cite{DBLP:journals/entcs/Gascon09, DBLP:conf/mfcs/PetelerQ22} have some similarity to our ST-automata:
  also these have transitions labeled by operations of inserting or removing letters from strings;
  but only prefixes and suffixes are considered and there is no state labeling.
  The full subword order is considered for example in \cite{DBLP:conf/lics/HalfonSZ17},
  but we are not aware of any automata-based approach to subword inclusions.
\end{example}

\begin{figure}[tbp]
  \centering
  \begin{tikzpicture}[x=2.8cm, y=2.5cm]
    \node (0) at (0,.5) {$\emptyset$};
    \node (a) at (.7,0) {$\loset{a}$};
    \node (aa) at (1.7,0) {$\loset{a\\a}$};
    \node at (2,0) {$\dotsm$};
    \node (b) at (-.7,0) {$\loset{b}$};
    \node (bb) at (-1.7,0) {$\loset{b\\b}$};
    \node at (-2,0) {$\dotsm$};
    \node (ab) at (-.7,-1) {$\loset{a\\b}$};
    \node at (-1,-1) {$\dotsm$};
    \node (ba) at (.7,-1) {$\loset{b\\a}$};
    \node at (1,-1) {$\dotsm$};

    \path[bend left=3mm] (0) edge node[sloped] {$
      \loset{a\ibullet}
      $} (a);
    \path[bend left=3mm] (a) edge node[sloped, swap] {$
      \loset{\ibullet a}
      $} (0);
    \path[bend left=3mm] (0) edge node[sloped, swap] {$
      \loset{b\ibullet}
      $} (b);
    \path[bend left=3mm] (b) edge node[sloped] {$
      \loset{\ibullet b}
      $} (0);

    \path[bend left=3mm] (a) edge node {$
      \loset{\pibullet a\ibullet\\\ibullet a\ibullet},
      \loset{\ibullet a\ibullet\\\pibullet a\ibullet}
      $} (aa);
    \path[bend left=3mm] (aa) edge node {$
      \loset{\ibullet a\pibullet\\\ibullet a\ibullet},
      \loset{\ibullet a\ibullet\\\ibullet a\pibullet}
      $} (a);

    \path[bend left=2mm] (b) edge node[pos=.2, sloped] {$
      \loset{\ibullet b\ibullet\\\pibullet a\ibullet}
      $} (ba);
    \path[bend left=2mm] (ba) edge node[pos=.2, sloped, swap] {$
      \loset{\ibullet b\ibullet\\\ibullet a\pibullet}
      $} (b);
    \path[bend left=3mm] (b) edge node {$
      \loset{\pibullet a\ibullet\\\ibullet b\ibullet}
      $} (ab);
    \path[bend left=3mm] (ab) edge node {$
      \loset{\ibullet a\pibullet\\\ibullet b\ibullet}
      $} (b);
    \path[bend left=3mm] (b) edge node {$
      \loset{\pibullet b\ibullet\\\ibullet b\ibullet},
      \loset{\ibullet b\ibullet\\\pibullet b\ibullet}
      $} (bb);
    \path[bend left=3mm] (bb) edge node {$
      \loset{\ibullet b\pibullet\\\ibullet b\ibullet},
      \loset{\ibullet b\ibullet\\\ibullet b\pibullet}
      $} (b);

    \path[bend left=2mm] (a) edge node[pos=.8, sloped, swap] {$
      \loset{\ibullet a\ibullet\\\pibullet b\ibullet}
      $} (ab);
    \path[bend left=2mm] (ab) edge node[pos=.8, sloped] {$
      \loset{\ibullet a\ibullet\\\ibullet b\pibullet}
      $} (a);
    \path[bend left=3mm] (a) edge node {$
      \loset{\pibullet b\ibullet\\\ibullet a\ibullet}
      $} (ba);
    \path[bend left=3mm] (ba) edge node {$
      \loset{\ibullet b\pibullet\\\ibullet a\ibullet}
      $} (a);

  \end{tikzpicture}
  \caption{Part of graph alphabet of starters and terminators on $\{a, b\}$.}
  \label{fig:stalph}
\end{figure}

The language of an automaton on graph $(V, \Sigma)$
is a set of morphisms in the free category generated by $(V, \Sigma)$ which we denote $(V, \Sigma)^*$.
(This is in analogy to $\Sigma^*$ being the free monoid on a set $\Sigma$,
a case which embeds into ours as the one-point graph $(*, \Sigma)$.)

We say that a set $X$ of morphisms in $(V, \Sigma)^*$ is \emph{regular}
(or $(V, \Sigma)$-regular if ambiguity may arise)
if there exists a finite automaton $A$ on graph alphabet $(V, \Sigma)$ for which $L(A)=X$.
Note that a regular set of morphisms is not necessarily a subcategory of $(V, \Sigma)^*$,
as it may not be closed under concatenation.
The \emph{universal automaton} on $(V, \Sigma)$ is the graph $(V, \Sigma)$ itself,
with all states initial and accepting.
Its language is $(V, \Sigma)^*$.

\section{Kleene Theorem}
\label{se:kleene}

Fix a graph $(V, \Sigma)$ for the rest of this paper.
We define a notion of rational sets of morphisms in $(V, \Sigma)^*$
and show that rationality and regularity agree.

When we write $X\subseteq (V, \Sigma)^*$, we always understand $X$ to be a set of morphisms.
For a morphism $x=(u, \omega, v)$ we write $d_0(x)=u$ and $d_1(x)=v$.
We extend the notation to sets of morphisms, writing
$d_0(X)=\{d_0(x)\mid x\in X\}$ and $d_1(X)=\{d_1(x)\mid x\in X\}$.

Morphisms $x=(u, \epsilon, v)$ are identities in $(V, \Sigma)^*$ and exist only for $u=v$;
for such $x$, $d_0(x)=d_1(x)=u$.
We use the notation $x=\id_u$ for identities.
This defines an embedding of $V$ into $(V, \Sigma)^*$ (as identities)
which we often will use implicitly below.
Similarly, we will identify edges $a\in \Sigma$ with their images $(d_0(a), a, d_1(a))$ in $(V, \Sigma)^*$.

\begin{definition}
  The \emph{concatenation} of $X, Y\subseteq (V, \Sigma)^*$ is
  \begin{equation*}
    X Y=\{(u, \omega \nu, w)\mid \exists v\in V: (u, \omega, v)\in X, (v, \nu, w)\in Y\}.
  \end{equation*}
\end{definition}

The \emph{Kleene plus} of $X\subseteq (V, \Sigma)^*$ is
\begin{equation*}
  X^+=\bigcup_{n\ge 1} X^n,
\end{equation*}
where $X^1=X$ and $X^{n+1}=X X^n$ as usual.

\begin{remark}
  Here the reader may ask why we define a Kleene plus and not a Kleene \emph{star},
  which would include $X^0$ in the infinite union above.
  The reason is that one would want $X^0$ to be the unit for concatenation,
  which forces to define $X^0=\{\id_u\mid u\in V\}$.
  Now in most applications, the graph $(V, \Sigma)$ has infinitely many vertices;
  so then $X^0$ would be infinite and non-regular (see Lem.~\ref{le:rat->d0fin} below).
  This is similar to the situation in typed Kleene algebra~\cite{DBLP:journals/tocl/Kozen00}.
\end{remark}

The \emph{basic} sets are $\emptyset$ and $\{(d_0(a), a, d_1(a))\}$ for each $a\in \Sigma$.

\begin{definition}
  A set $X\subseteq (V, \Sigma)^*$ is \emph{rational} if
  \begin{itemize}
  \item $X$ is basic,
  \item there are rational sets $Y, Z\subseteq (V, \Sigma)^*$ such that $X=Y\cup Z$,
  \item there are rational sets $Y, Z\subseteq (V, \Sigma)^*$ such that $X=Y Z$, or
  \item there is a rational set $Y\subseteq (V, \Sigma)^*$ such that $X=Y^+$.
  \end{itemize}
\end{definition}

We follow the usual strategy for proving that rational sets are precisely regular sets
and transfer the notions of the above definition to the automaton side.

\begin{definition}
  The \emph{basic} automata on $(V, \Sigma)$ are
  \begin{itemize}
  \item $A_\emptyset=(\emptyset, \emptyset, \emptyset, \emptyset, \emptyset, \emptyset, \emptyset, \emptyset)$, the empty automaton, and
  \item $A_{\{a\}}=(\{d_0(a), d_1(a)\}, \{d_0(a)\}, \{d_1(a)\}, \{a\}, d_0, d_1, \id, \id)$
    for every $a\in \Sigma$.
  \end{itemize}
\end{definition}

\begin{lemma}
  \label{le:kleene.basic}
  For every basic automaton $A_X$, $L(A_X)=X$.
\end{lemma}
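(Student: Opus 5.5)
The proof will be a direct computation of accepting paths, by cases on $X$. For $A_\emptyset$ there is nothing to do: it has no states, so no paths (not even constant ones), hence no accepting paths, and $L(A_\emptyset)=\emptyset$.

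For $A_{\{a\}}$, I would first check that it is an automaton in the sense of Definition~\ref{de:aut}. The single transition $a$ has $s(a)=d_0(a)$ and $t(a)=d_1(a)$. Since $\mu=\id$, the typing conditions $\mu(s(a))=d_0(\lambda(a))$ and $\mu(t(a))=d_1(\lambda(a))$ hold trivially.

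Next I would enumerate the paths. They are the constant paths $(d_0(a))$ and $(d_1(a))$, the one-step path $(d_0(a),a,d_1(a))$, and, if $a$ can be traversed again, longer ones. A path is accepting iff it starts in $I=\{d_0(a)\}$ and ends in $F=\{d_1(a)\}$. The one-step path is accepting, with label $(d_0(a),a,d_1(a))$, which is the identification of $a$ in $(V,\Sigma)^*$. So $X\subseteq L(A_{\{a\}})$. For the converse, I need that no other path is accepting.

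This is the main obstacle: the case where $a$ is a loop, $d_0(a)=d_1(a)=u$. If the state set $\{d_0(a),d_1(a)\}$ is read literally, it collapses to $\{u\}$, and $u$ is both initial and accepting. Then the constant path $(u)$ is accepting, contributing $\id_u$, and so is every iterate $a^n$. The lemma would then be false.

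I would therefore read the definition as taking the two states to be formally distinct copies, say $Q=\{0,1\}$ with $\mu(0)=d_0(a)$, $\mu(1)=d_1(a)$, $I=\{0\}$, $F=\{1\}$, and the transition going from $0$ to $1$. I would add a remark making this explicit. With distinct states the automaton is acyclic: the only transition leaves $0$ and enters $1$. The paths are exactly $(0)$, $(1)$ and $(0,a,1)$. Constant paths are not accepting because $0\ne 1$, and no path has length $\ge 2$, since that would need a transition out of $1$. Hence the only accepting path is $(0,a,1)$, with label $(d_0(a),a,d_1(a))$, and $L(A_{\{a\}})=\{a\}=X$.
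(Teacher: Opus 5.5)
Your argument is correct and is the same direct inspection the paper uses (its proof is just ``By inspection''). Your worry about loops is also well founded. Read literally, for a loop $a$ at $u$ the definition yields a single state $u$ that is both initial and accepting, so $L(A_{\{a\}})=\{\id_u\}\cup\{a\}^+\ne\{a\}$; this affects every edge of the one-point graph $(*,\Sigma)$ and the loops in Fig.~\ref{fig:lock}. Taking two formally distinct states, as you do, is the right repair, and the paper's one-line proof silently relies on it.
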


\begin{proof}
  By inspection.
  \qed
\end{proof}

\begin{definition}
  The \emph{union} of two automata
  $A_1=(Q_1, I_1, F_1, E_1, s_1, t_1, \mu_1, \lambda_1)$ and
  $A_2=(Q_2, I_2,$ $F_2, E_2, s_2, t_2, \mu_2, \lambda_2)$ on $(V, \Sigma)$,
  with $Q_1\cap Q_2=E_1\cap E_2=\emptyset$, is
  \begin{equation*}
    A_1\cup A_2=(Q_1\cup Q_2, I_1\cup I_2, F_1\cup F_2, E_1\cup E_2, s, t, \mu, \lambda),
  \end{equation*}
  with
  \begin{gather*}
    s(e)=
    \begin{cases}
      s_1(e) &\text{if } e\in E_1, \\
      s_2(e) &\text{if } e\in E_2,
    \end{cases}
    \qquad
    t(e)=
    \begin{cases}
      t_1(e) &\text{if } e\in E_1, \\
      t_2(e) &\text{if } e\in E_2,
    \end{cases}
    \\
    \mu(q)=
    \begin{cases}
      \mu_1(q) &\text{if } q\in Q_1, \\
      \mu_2(q) &\text{if } q\in Q_2,
    \end{cases}
    \qquad
    \lambda(e)=
    \begin{cases}
      \lambda_1(e) &\text{if } e\in E_1, \\
      \lambda_2(e) &\text{if } e\in E_2.
    \end{cases}
  \end{gather*}
\end{definition}

\begin{restatable}{lemma}{kleeneunion}
  \label{le:kleene.union}
  For all automata $A_1$, $A_2$ on $(V, \Sigma)$, $L(A_1\cup A_2)=L(A_1)\cup L(A_2)$.
\end{restatable}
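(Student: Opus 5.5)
The proof is a double inclusion, and both directions come from one observation: since $Q_1\cap Q_2=E_1\cap E_2=\emptyset$ and $s,t,\mu,\lambda$ are defined piecewise, every transition $e$ of $A_1\cup A_2$ lies in exactly one $E_i$, and then $s(e),t(e)\in Q_i$. So I will first show that every path of $A_1\cup A_2$ lies entirely inside $A_1$ or entirely inside $A_2$.

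Let $\pi=(q_0,e_1,q_1,\dotsc,e_n,q_n)$ be a path in $A_1\cup A_2$, and induct on $n$. If $n=0$, then $\pi=(q_0)$ and $q_0$ lies in exactly one $Q_i$. If $n\ge1$, $e_1$ lies in some $E_i$, so $q_0=s_i(e_1)\in Q_i$. If some later $e_j$ were in $E_{3-i}$, the shared state $q_{j-1}=t(e_{j-1})=s(e_j)$ would lie in both $Q_i$ and $Q_{3-i}$, contradicting disjointness. Hence $\pi$ is a path of $A_i$.

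Its label is the same in both automata, because $\mu$ and $\lambda$ restrict to $\mu_i$ and $\lambda_i$ there. It is accepting in $A_1\cup A_2$ iff $q_0\in I_1\cup I_2$ and $q_n\in F_1\cup F_2$. Since $q_0,q_n\in Q_i$, $I_j\subseteq Q_j$, $F_j\subseteq Q_j$ and $Q_1\cap Q_2=\emptyset$, this holds iff $q_0\in I_i$ and $q_n\in F_i$. So accepting paths of $A_1\cup A_2$ are exactly the accepting paths of $A_1$ together with those of $A_2$, with the same labels. This gives $\subseteq$.

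For $\supseteq$, any accepting path of $A_i$ is a path of $A_1\cup A_2$, since $s,t$ extend $s_i,t_i$. It is accepting there with the same label, since $I_i\subseteq I_1\cup I_2$ and $F_i\subseteq F_1\cup F_2$.

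The only real point is that paths cannot switch components; the rest is routine. If the automata are not disjoint, one first replaces them by isomorphic disjoint copies, which does not change their languages.
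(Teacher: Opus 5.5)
Your proof is correct and follows the paper's route: accepting paths of $A_i$ stay accepting in $A_1\cup A_2$, and conversely a path of $A_1\cup A_2$ that starts in $Q_i$ stays in $Q_i$ by disjointness. You are slightly more careful than the paper, which leaves implicit that the final state then lies in $F_i$ rather than merely in $F_1\cup F_2$.
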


In order to treat concatenation, we need to introduce silent transitions.
An automaton on $(V, \Sigma)$ with silent transitions is $A=(Q, I, F, E, s, t, \mu, \lambda)$ as of Def.~\ref{de:aut},
except that $\lambda: E\to \Sigma\cup V$,
so that transitions can also be labeled with vertices,
with the convention that $d_0(v)=d_1(v)=v$ for $v\in V$.
Hence silent transitions cannot change the type of a state.

\begin{remark}
  Equivalently, we could have added silent transitions to the graph $(V, \Sigma)$ itself:
  adding an identity loop to every vertex in $V$.
  An automaton on the so-augmented $(V, \Sigma)_\tau$ is easily seen to be equivalent
  to an automaton on $(V, \Sigma)$ with silent transitions.
\end{remark}

\begin{lemma}
  For every automaton $A$ on $(V, \Sigma)$ with silent transitions,
  there exists another automaton $A'$ on $(V, \Sigma)$ without silent transitions such that $L(A)=L(A')$.
\end{lemma}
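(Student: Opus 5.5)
We use the standard $\epsilon$-closure construction, adapted so that it respects state labels. Write $E_\tau=\{e\in E\mid \lambda(e)\in V\}$ for the silent transitions and $E_\Sigma=E\setminus E_\tau$ for the others. For $q,q'\in Q$ write $q\leadsto q'$ if there is a path from $q$ to $q'$ using only transitions in $E_\tau$; this includes the constant path, so $\leadsto$ is reflexive and transitive. Every silent transition $e$ has $\lambda(e)=v$ for some $v\in V$, and the consistency condition of Def.~\ref{de:aut} forces $\mu(s(e))=v=\mu(t(e))$. So the first step is to record that $q\leadsto q'$ implies $\mu(q)=\mu(q')$.

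Next we define $A'=(Q, I, F', E', s', t', \mu, \lambda')$ on the same states with the same labeling $\mu$. Let $F'=\{q\mid \exists q'\in F: q\leadsto q'\}$. The transitions are $E'=\{(q,e)\mid e\in E_\Sigma,\ q\leadsto s(e)\}$, with $s'(q,e)=q$, $t'(q,e)=t(e)$ and $\lambda'(q,e)=\lambda(e)$. The consistency requirement holds because $\mu(q)=\mu(s(e))=d_0(\lambda(e))$ by the first step, and $\mu(t(e))=d_1(\lambda(e))$ by consistency in $A$. If $A$ is finite then so is $A'$.

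Then we prove $L(A)=L(A')$. Take an accepting path $\pi$ in $A$ and split it at its non-silent transitions $e_1,\dots,e_n$. It becomes a sequence of silent segments $q_0\leadsto s(e_1)$, then $t(e_i)\leadsto s(e_{i+1})$, and finally $t(e_n)\leadsto q_f\in F$. This gives the path $q_0,(q_0,e_1),t(e_1),(t(e_1),e_2),\dots,t(e_n)$ in $A'$, which ends in $F'$; when $n=0$ it is the constant path $(q_0)$, and $q_0\in F'$.

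For the labels, the word parts agree because silent labels contribute nothing to the word, that is, $\lambda(e)\in V$ acts as an identity in the concatenation. The endpoint types also agree: the start state is the same, and the end state $t(e_n)$ (or $q_0$) has the same $\mu$-label as $q_f$ by the first step.

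The converse is the same argument run backwards. Each transition $(q,e)$ of $A'$ expands to a silent path from $q$ to $s(e)$ followed by $e$, and a final state in $F'$ is extended by a silent path into $F$. Both expansions preserve the label, again by the first step.

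The only point needing care is the typing bookkeeping. Labels carry endpoint types, and the constant path in $A'$ must yield $\mu(q_0)$ even though the corresponding path in $A$ ends at a different state $q_f$. The first step handles this.
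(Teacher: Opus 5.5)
Your proposal is correct and takes the same approach as the paper, which only says that ``the standard closure construction applies'' and gives no details. Your write-up supplies those details, including the one graph-specific check: silent transitions preserve $\mu$, so the endpoint types in path labels are unaffected and the new transitions $(q,e)$ satisfy the consistency condition.
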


\begin{proof}
  The standard closure construction applies.
  \qed
\end{proof}

\begin{definition}
  The \emph{concatenation} of two automata
  $A_1=(Q_1, I_1, F_1, E_1, s_1, t_1,$ $\mu_1, \lambda_1)$ and
  $A_2=(Q_2, I_2, F_2, E_2, s_2, t_2, \mu_2, \lambda_2)$ on $(V, \Sigma)$,
  with $Q_1\cap Q_2=E_1\cap E_2=\emptyset$, is
  \begin{equation*}
    A_1 A_2=(Q_1\cup Q_2, I_1, F_2, E_1\cup E_2\cup E_\tau, s, t, \mu, \lambda),
  \end{equation*}
  with
  \begin{gather*}
    E_\tau = \{(q_1, v, q_2)\mid q_1\in F_1, q_2\in I_2, \mu_1(q_1)=\mu_2(q_2)=v\}, \\
    s(e)=
    \begin{cases}
      s_1(e) &\text{if } e\in E_1, \\
      s_2(e) &\text{if } e\in E_2, \\
      q_1 &\text{if } e=(q_1, v, q_2)\in E_\tau,
    \end{cases}
    \qquad
    t(e)=
    \begin{cases}
      t_1(e) &\text{if } e\in E_1, \\
      t_2(e) &\text{if } e\in E_2, \\
      q_2 &\text{if } e=(q_1, v, q_2)\in E_\tau,
    \end{cases}
    \\
    \mu(q)=
    \begin{cases}
      \mu_1(q) &\text{if } q\in Q_1, \\
      \mu_2(q) &\text{if } q\in Q_2,
    \end{cases}
    \qquad
    \lambda(e)=
    \begin{cases}
      \lambda_1(e) &\text{if } e\in E_1, \\
      \lambda_2(e) &\text{if } e\in E_2, \\
      v &\text{if } e=(q_1, v, q_2)\in E_\tau.
    \end{cases}
  \end{gather*}
\end{definition}

Note that this is an automaton with silent transitions,
and that we are only connecting accepting and initial states of the same type.

\begin{restatable}{lemma}{kleeneconcat}
  \label{le:kleene.concat}
  For all automata $A_1$, $A_2$ on $(V, \Sigma)$, $L(A_1 A_2)=L(A_1) L(A_2)$.
\end{restatable}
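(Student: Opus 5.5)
We prove the two inclusions directly, working with paths in $A_1 A_2$ and using the convention that silent transitions contribute nothing to the $\Sigma$-word of a label. Let $A_1 A_2$ be the concatenation automaton; it has initial states $I_1$ and accepting states $F_2$. Its label of a path $\pi=(q_0,e_1,\dotsc,e_n,q_n)$ is $(\mu(q_0),\omega,\mu(q_n))$, where $\omega$ is the product of the $\lambda(e_i)$ with $e_i\notin E_\tau$.

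For $L(A_1A_2)\subseteq L(A_1)L(A_2)$, take an accepting path $\pi$. It starts in $Q_1$ and ends in $Q_2$. Every transition of $E_1$ stays inside $Q_1$ and every transition of $E_2$ stays inside $Q_2$. Silent transitions go only from $F_1\subseteq Q_1$ to $I_2\subseteq Q_2$, and no transition leads from $Q_2$ back to $Q_1$. Hence $\pi$ contains exactly one silent transition $e=(q_1,v,q_2)$, and it splits as $\pi=\pi_1\,e\,\pi_2$. Here $\pi_1$ is a path in $A_1$ from some state of $I_1$ to $q_1\in F_1$, and $\pi_2$ is a path in $A_2$ from $q_2\in I_2$ to some state of $F_2$.

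So $\lambda(\pi_1)=(u,\omega,v)\in L(A_1)$ and $\lambda(\pi_2)=(v,\nu,w)\in L(A_2)$. The middle vertices agree because $\mu_1(q_1)=\mu_2(q_2)=v$ by the definition of $E_\tau$. Since $\lambda(\pi)=(u,\omega\nu,w)$, this label lies in $L(A_1)L(A_2)$.

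For the converse, let $(u,\omega,v)\in L(A_1)$ be the label of an accepting path $\pi_1$ of $A_1$, and let $(v,\nu,w)\in L(A_2)$ be the label of an accepting path $\pi_2$ of $A_2$. Then $\tgt(\pi_1)\in F_1$ and $\src(\pi_2)\in I_2$, and both states are labeled $v$. So the silent transition $(\tgt(\pi_1),v,\src(\pi_2))$ belongs to $E_\tau$. Composing gives a path $\pi_1\,(\tgt(\pi_1),v,\src(\pi_2))\,\pi_2$ from $I_1$ to $F_2$ in $A_1A_2$, whose label is $(u,\omega\nu,w)$.

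This covers constant paths $\pi_1=(q)$ or $\pi_2=(q)$ as well; they correspond to identities in the concatenation. Finally, by the preceding lemma we may remove the silent transitions without changing the language, so the result also holds for the silent-free automaton.

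The main obstacle is the first inclusion, specifically the claim that an accepting path uses exactly one silent transition. This is where the disjointness assumptions $Q_1\cap Q_2=E_1\cap E_2=\emptyset$ and the one-way direction of $E_\tau$ are essential. The rest is bookkeeping of the type labels at the junction.
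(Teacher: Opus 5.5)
Your proposal is correct and follows essentially the same route as the paper's proof. Both proofs splice two accepting paths through the silent transition $(\tgt(\pi_1), v, \src(\pi_2))\in E_\tau$ for one inclusion. For the other inclusion, both split an accepting path of $A_1A_2$ at its unique $E_\tau$-transition; your justification that exactly one such transition occurs (no transition leads from $Q_2$ back to $Q_1$) is more explicit than the paper's.
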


\begin{definition}
  The \emph{Kleene plus} of an automaton
  $A=(Q, I, F, E, s, t, \mu, \lambda)$ on $(V, \Sigma)$ is
  \begin{equation*}
    A^+=(Q, I, F, E\cup E_\tau, s', t', \mu, \lambda'),
  \end{equation*}
  with
  \begin{gather*}
    E_\tau = \{(q_1, v, q_2)\mid q_1\in F, q_2\in I, \mu(q_1)=\mu(q_2)=v\}, \\
    s'(e)=
    \begin{cases}
      s(e) &\text{if } e\in E, \\
      q_1 &\text{if } e=(q_1, v, q_2)\in E_\tau,
    \end{cases}
    \qquad
    t'(e)=
    \begin{cases}
      t(e) &\text{if } e\in E, \\
      q_2 &\text{if } e=(q_1, v, q_2)\in E_\tau,
    \end{cases}
    \\
    \lambda'(e)=
    \begin{cases}
      \lambda(e) &\text{if } e\in E, \\
      v &\text{if } e=(q_1, v, q_2)\in E_\tau.
    \end{cases}
  \end{gather*}
\end{definition}

\begin{restatable}{lemma}{kleenestar}
  \label{le:kleene.star}
  For any automaton $A$ on $(V, \Sigma)$, $L(A^+)=L(A)^+$.
\end{restatable}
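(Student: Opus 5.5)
We prove the two inclusions separately, working with $A^+$ as an automaton with silent transitions. Since a silent transition labelled $v$ only connects states of type $v$, it contributes nothing to the word component of a path label and does not change the type. Throughout we use the convention that the label of a path in an automaton with silent transitions is $(\mu(\src(\pi)),\omega,\mu(\tgt(\pi)))$, where $\omega$ is the concatenation of the non-silent labels. This is the convention under which the closure construction of the previous lemma preserves languages.

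For $L(A)^+\subseteq L(A^+)$ we show $L(A)^n\subseteq L(A^+)$ by induction on $n$.
\begin{itemize}
\item For $n=1$, every accepting path of $A$ is an accepting path of $A^+$ with the same label, because $A^+$ has the same states, initial and accepting sets, and contains all transitions of $A$.
\item For the step, let $x=(u,\omega\nu,w)\in L(A)L(A)^n$ with $(u,\omega,v)\in L(A)$ and $(v,\nu,w)\in L(A)^n$. Take an accepting path $\pi_1$ of $A$ for the first factor, and, by induction, an accepting path $\pi_2$ of $A^+$ for the second. Then $\tgt(\pi_1)\in F$ and $\src(\pi_2)\in I$, and both have type $v$. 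Hence $e=(\tgt(\pi_1),v,\src(\pi_2))\in E_\tau$, and $\pi_1 e\pi_2$ is an accepting path of $A^+$ with label $(u,\omega\nu,w)$.
\end{itemize}

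For $L(A^+)\subseteq L(A)^+$, take an accepting path $\pi$ of $A^+$ and let $k$ be the number of $E_\tau$-transitions it uses. We argue by induction on $k$.
\begin{itemize}
\item If $k=0$, then $\pi$ is an accepting path of $A$, so $\lambda(\pi)\in L(A)$.
\item Otherwise, split $\pi$ at its first silent transition $e=(q_1,v,q_2)$, writing $\pi=\pi_1e\pi_2$. Here $\pi_1$ is a path in $A$ from a state of $I$ to $q_1\in F$, so its label $(u,\omega,v)$ lies in $L(A)$. The path $\pi_2$ runs from $q_2\in I$ to a state of $F$ and uses $k-1$ silent transitions, so by induction its label $(v,\nu,w)$ lies in $L(A)^+$. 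The types match at $v=\mu(q_1)=\mu(q_2)$, so $\lambda(\pi)=(u,\omega\nu,w)\in L(A)L(A)^+\subseteq L(A)^+$.
\end{itemize}
The same splitting argument proves Lemma~\ref{le:kleene.concat}.

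The main point needing care is the type bookkeeping at the junctions, including the degenerate case of constant paths. A constant path $(q)$ with $q\in I\cap F$ yields $\id_{\mu(q)}\in L(A)$, and silent loops may then glue such identities. This is harmless: the construction only connects states of equal type, and concatenating with an identity is neutral in $(V,\Sigma)^*$. So the proof goes through without special-casing, provided the path-label convention for silent transitions is fixed as above.
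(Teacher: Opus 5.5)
Your proof is correct and follows the same route as the paper. An accepting path of $A^+$ is cut at its $E_\tau$-transitions into accepting paths of $A$, and conversely accepting paths of $A$ for the factors are glued by $E_\tau$-transitions between equally typed accepting and initial states; the only difference is that you organize both directions as inductions (on the number of factors, resp. of silent transitions), whereas the paper performs the full decomposition and gluing in one step.
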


We now have all the ingredients to show that rational sets are regular.
For the inverse direction, one last lemma is necessary.

\begin{lemma}
  \label{le:kleene.regrat}
  For any finite automaton $A$ on $(V, \Sigma)$, $L(A)\subseteq (V, \Sigma)^*$ is rational.
\end{lemma}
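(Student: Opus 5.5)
The plan is the McNaughton--Yamada construction, adapted so that it uses only Kleene plus and never needs a unit for concatenation. Write the finite automaton as $A=(Q,I,F,E,s,t,\mu,\lambda)$ with $Q=\{q_1,\dotsc,q_n\}$. For $0\le k\le n$ and $1\le i,j\le n$, let $R^k_{ij}\subseteq(V,\Sigma)^*$ be the set of labels $\lambda(\pi)$ of \emph{non-constant} paths $\pi$ from $q_i$ to $q_j$ all of whose intermediate states lie in $\{q_1,\dotsc,q_k\}$. Every element of $R^k_{ij}$ has $d_0=\mu(q_i)$ and $d_1=\mu(q_j)$.

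First, $R^0_{ij}$ is the finite union, over the transitions $e$ with $s(e)=q_i$ and $t(e)=q_j$, of the basic sets $\{(d_0(\lambda(e)),\lambda(e),d_1(\lambda(e)))\}$. If there is no such transition, $R^0_{ij}=\emptyset$, which is also basic. So every $R^0_{ij}$ is rational.

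For the inductive step I will prove
\begin{equation*}
  R^k_{ij}=R^{k-1}_{ij}\cup R^{k-1}_{ik}R^{k-1}_{kj}\cup R^{k-1}_{ik}\bigl(R^{k-1}_{kk}\bigr)^+R^{k-1}_{kj}.
\end{equation*}
The argument cuts a path at its visits to $q_k$. If $q_k$ is not visited at an interior position, the path lies in $R^{k-1}_{ij}$. If it is visited exactly once, the path lies in $R^{k-1}_{ik}R^{k-1}_{kj}$. If it is visited $m\ge 2$ times, the path splits into a first piece, $m-1$ loops at $q_k$, and a last piece, all non-constant. The separate middle term replaces the usual Kleene star, which we do not have, as explained in the remark on Kleene plus.

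The one point specific to graph alphabets is that the category-style concatenation of the definition creates no spurious words. This holds because every word in $R^{k-1}_{ik}$ ends at $\mu(q_k)$ and every word in $R^{k-1}_{kj}$ and $R^{k-1}_{kk}$ starts there. Hence the gluing vertex $v$ in the definition of $XY$ is forced to be $\mu(q_k)$. A concatenation $(u,\omega,v)(v,\nu,w)$ then corresponds exactly to composing a path ending in $q_k$ with one starting in $q_k$, and the converse inclusion follows in the same way. By induction all $R^k_{ij}$ are rational, so the non-constant part of the language,
\begin{equation*}
  L(A)\setminus\{\id_u\mid u\in V\}=\bigcup_{q_i\in I,\,q_j\in F}R^n_{ij},
\end{equation*}
is a finite union of rational sets and hence rational.

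The remaining part of $L(A)$ is the finite set $\{\id_{\mu(q)}\mid q\in I\cap F\}$, coming from constant paths, and I expect this to be the main obstacle. The rational operations applied to the basic sets as literally defined never produce an identity, since concatenation and plus of non-empty words stay non-empty. I would first check whether the paper intends identities $\id_v$ to count as rational via the implicit embedding of $V$ into $(V,\Sigma)^*$, in which case the singletons $\{\id_v\}$ must be admitted among the basic sets. With that convention the result follows by one further finite union. Otherwise the statement as written holds only for automata with $I\cap F=\emptyset$, or for languages without identities, and the proof must be restricted to the non-constant part above.
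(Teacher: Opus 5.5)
Your argument is correct, including your diagnosis of the identity issue, but it follows a different route from the paper. The paper runs Brzozowski--McCluskey state elimination on a generalized automaton whose transitions carry sets of morphisms. After splitting so that $|I|=|F|=1$, it deletes a state $q\notin I\cup F$ by relabeling each detour $q_1\to q\to q_2$ with $\lambda(e_1)\bigl(\{\mu(q)\}\cup Y_q^+\bigr)\lambda(e_2)$, where $Y_q$ collects the loop labels at $q$. It then ends with explicit expressions for one- and two-state automata. Throughout, it uses $\{\mu(q)\}\cup Y^+$ in place of $Y^*$; that is, it silently treats $\{\id_{\mu(q)}\}$ as rational. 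This confirms what you suspected. With the basic sets as literally defined, no rational set contains an identity, so the statement fails whenever $I\cap F\neq\emptyset$. The paper is implicitly working under your first convention, and there your final finite union completes the proof.

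What your McNaughton--Yamada recurrence buys is that it never needs a unit. Separating $R^{k-1}_{ik}R^{k-1}_{kj}$ from $R^{k-1}_{ik}(R^{k-1}_{kk})^+R^{k-1}_{kj}$ handles ``zero loops'' without identities. Hence the identity-free part of $L(A)$ is rational under the literal definition, and identities enter only through the one term $\{\id_{\mu(q)}\mid q\in I\cap F\}$. Its correctness also reduces to a case split on the number of interior visits to $q_k$, which guards against the zero-versus-at-least-one slips that a plus-only setting invites. The paper's two-state formulas do contain such slips. For $I=\{q_1\}$, $F=\{q_2\}$, the outer Kleene plus demands a factor from $L_{21}$, so an automaton whose only transition goes from $q_1$ to $q_2$ would get the empty language. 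For $I=F=\{q_1\}$, paths looping at $q_1$ without visiting $q_2$ are not covered. State elimination is the more economical procedure in practice, and your splitting trick is exactly what would make it work without identities.
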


\begin{proof}
  We follow the usual Brzozowski-McCluskey procedure.
  Write $A=(Q, I, F,$ $E, s, t, \mu, \lambda)$.
  For this proof we assume that $\lambda: E\to 2^{(V, \Sigma)^*}$ labels transitions by sets of morphisms,
  a natural extension of the formalism.
  (This is sometimes called a generalized automaton.)
  
  If $|Q|=0$, then $L(A)=\emptyset$ which is rational.
  If $|Q|=1$, then $L(A)=\emptyset$ unless $Q=I=F$.
  In that latter case, and writing $Q=\{q\}$,
  \begin{equation*}
    L(A)=\{\mu(q)\}\cup \big(\bigcup\nolimits_{e\in E} \lambda(e)\big)^+,
  \end{equation*}
  which is rational as $E$ is finite.

  Assume $|Q|=2$.
  If $|I|=0$ or $|F|=0$, then $L(A)=\emptyset$.
  If $|I|=2$ or $|F|=2$, then we may split $A$ into several copies,
  one for each element of $I\times F$;
  the copies each have precisely one initial and precisely one accepting state,
  and $L(A)$ is the union of the languages of the copies.

  Hence we may assume $|I|=1$ and $|F|=1$.
  Write $Q=\{q_1, q_2\}$.
  If $I=F=\{q_1\}$ (the case $I=F=\{q_2\}$ is symmetric), then
  \begin{equation*}
    L(A)=\{\mu(q_1)\}\cup ((\{\mu(q_1)\}\cup L_{1 1}^+) L_{1 2} (\{\mu(q_2)\}\cup L_{2 2}^+) L_{2 1})^+,
  \end{equation*}
  with
  \begin{equation*}
    L_{i j}=\bigcup \{\lambda(e)\mid e\in E, s(e)=q_i, t(e)=q_j\},
  \end{equation*}
  which hence is rational.
  If $I=\{q_1\}$ and $F=\{q_2\}$ (the other case being symmetric), then
  \begin{equation*}
    L(A)=((\{\mu(q_1)\}\cup L_{1 1}^+) L_{1 2} (\{\mu(q_2)\}\cup L_{2 2}^+) L_{2 1})^+ L_{1 2} (\{\mu(q_2)\}\cup L_{2 2}^+).
  \end{equation*}

  In case $|Q|\ge 3$, we remove a state.
  We may again assume $|I|=|F|=1$, so there is a state $q\in Q\setminus (I\cup F)$.
  Now define $A'=(Q', I, F, E'\cup E_q, s', t', \mu\rest{Q'}, \lambda')$,
  with $Q'=Q\setminus\{q\}$, $E'=\{e\in E\mid s(e), t(e)\ne q\}$, $\mu\rest{Q'}$ the restriction to $Q'$, and
  \begin{gather*}
    \begin{multlined}
      E_q=\big\{\big(
      q_1,
      \lambda(e_1)
      \big(\mu(q)\cup \big(\bigcup \{\lambda(e)\mid e\in E, s(e)=t(e)=q\}\big)^+\big)
      \lambda(e_2),
      q_2\big)
      \bigmid{} \\[-.5ex]
      \qquad q_1, q_2\in Q', e_1, e_2\in E, s(e_1)=q_1, t(e_1)=s(e_2)=q, t(e_2)=q_2\big\},
    \end{multlined} \\
    s'(e)=
    \begin{cases}
      s(e) &\text{if } e\in E, \\
      q_1 &\text{if } e=(q_1, \omega, q_2)\in E_q,
    \end{cases}
    \qquad
    t'(e)=
    \begin{cases}
      t(e) &\text{if } e\in E, \\
      q_2 &\text{if } e=(q_1, \omega, q_2)\in E_q,
    \end{cases}
    \\
    \lambda'(e)=
    \begin{cases}
      \lambda(e) &\text{if } e\in E, \\
      w &\text{if } e=(q_1, \omega, q_2)\in E_q.
    \end{cases}
  \end{gather*}
  One then easily shows that $L(A')=L(A)$.
  \qed
\end{proof}

\begin{theorem}
  A set $X\subseteq (V, \Sigma)^*$ is regular iff it is rational.
\end{theorem}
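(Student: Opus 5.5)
The two directions are proved separately, and both largely assemble the lemmas already established in this section.

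For ``rational implies regular'' I will use structural induction on the definition of rational sets. In the base case $X$ is basic, and Lemma~\ref{le:kleene.basic} gives a finite automaton $A_X$ with $L(A_X)=X$. In the inductive cases $X=Y\cup Z$, $X=YZ$ or $X=Y^+$, the induction hypothesis provides finite automata $A_Y$ and $A_Z$ recognising $Y$ and $Z$. After renaming, their state and transition sets are disjoint.

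I then form $A_Y\cup A_Z$, $A_Y A_Z$ or $A_Y^+$ respectively. Lemmas~\ref{le:kleene.union}, \ref{le:kleene.concat} and~\ref{le:kleene.star} show that these recognise $Y\cup Z$, $YZ$ and $Y^+$. Concatenation and Kleene plus introduce silent transitions, so I apply the earlier lemma removing them to obtain an ordinary automaton with the same language.

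The one point needing care is finiteness. The constructions add only the transitions in $E_\tau$, and these are indexed by pairs in $F_1\times I_2$ (respectively $F\times I$), so they are finitely many. The closure construction keeps the state set and yields finitely many transitions, since there are finitely many pairs of states and finitely many labels occurring in $A$. Hence the resulting automaton is finite and $X$ is regular.

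For ``regular implies rational'', suppose $X=L(A)$ for a finite automaton $A$. Lemma~\ref{le:kleene.regrat} then states directly that $L(A)$ is rational, which finishes the proof.

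I expect the main obstacle to lie inside Lemma~\ref{le:kleene.regrat}, which may be taken as given here. Its state-elimination step uses labels such as $\{\mu(q)\}$, that is, identities. Identities are not basic sets, so for rationality to be preserved one must check that each identity $\id_u$ can itself be obtained rationally.

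If this is needed, I will first handle the troublesome term $\{\mu(q)\}\cup L^+$ by distributing it into the surrounding concatenation. Concretely, $L_1(\{\mu(q)\}\cup L^+)L_2$ equals $L_1L_2\cup L_1L^+L_2$ whenever $L_1$ has target type $\mu(q)$, and this removes identities from all labels except the isolated constant paths. Those constant paths, $\{\mu(q)\}$ for $q\in I\cap F$, are the delicate case. I would handle them by observing that the definition of rational sets must implicitly admit identity singletons $\id_u$ for $u\in V$, and treating these as basic for this purpose. A basic automaton with a single state that is both initial and accepting makes them regular, so the first direction still goes through.
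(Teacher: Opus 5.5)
Your overall plan is the paper's own proof, which simply cites Lemmas~\ref{le:kleene.basic}, \ref{le:kleene.union}, \ref{le:kleene.concat}, \ref{le:kleene.star} and~\ref{le:kleene.regrat}; your finiteness check for $E_\tau$ and for the closure construction is a useful addition. The identity issue you raise is not a delicate case, though: it refutes the statement as written. Under the paper's definition, every element of a rational set has a nonempty word. Basic sets contain only single edges, union preserves this, $YZ$ only produces words of length at least $2$, and $Y^+=\bigcup_{n\ge1}Y^n$. On the other hand, the automaton with a single state $q$, $\mu(q)=u$, $I=F=\{q\}$ and no transitions accepts its constant path, so $\{\id_u\}$ is regular but not rational. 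Hence the claim in the proof of Lemma~\ref{le:kleene.regrat} that $\{\mu(q)\}\cup(\cdots)^+$ is rational is false, and the paper's proof breaks exactly where you looked. Adding the singletons $\{\id_u\}$, $u\in V$, to the basic sets is therefore a correction of the definition, not an implicit convention. You should present it that way and justify it with this example. The change does not conflict with Lemma~\ref{le:rat->d0fin}, since $d_0(YZ)$ and $d_0(Y^+)$ are contained in $d_0(Y)$. Once you adopt it, your distribution step is superfluous, because every $\{\mu(q)\}\cup L^+$ is then rational directly. The distribution step is what you would use for the other natural repair, which keeps the paper's definition and proves: $X$ is regular iff $X=Y\cup\{\id_u\mid u\in U\}$ for some rational $Y$ and finite $U\subseteq V$. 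Choose one of the two repairs rather than mixing them.

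Your base case also relies on Lemma~\ref{le:kleene.basic}, which fails for loops. If $d_0(a)=d_1(a)=v$, then $A_{\{a\}}$ has the single state $v$, which is initial and accepting and carries an $a$-loop. So $L(A_{\{a\}})=\{\id_v\}\cup\{a\}^+\ne\{a\}$. Use two distinct states $0,1$ with $\mu(0)=d_0(a)$, $\mu(1)=d_1(a)$, $I=\{0\}$ and $F=\{1\}$ instead. Likewise, if you rely on the proof of Lemma~\ref{le:kleene.regrat} beyond the identity issue, its two-state formulas lose paths. For example, with $I=\{q_1\}$, $F=\{q_2\}$ and a single edge from $q_1$ to $q_2$, the given expression evaluates to $\emptyset$. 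Use the standard form $(L_{11}\cup L_{12}L_{22}^{*}L_{21})^{*}L_{12}L_{22}^{*}$ instead, reading $Y^{*}$ as $\{\id_v\}\cup Y^+$ at the relevant vertex $v$.
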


\begin{proof}
  By Lemmas \ref{le:kleene.basic}, \ref{le:kleene.union}, \ref{le:kleene.concat}, \ref{le:kleene.star}, and \ref{le:kleene.regrat}.
  \qed
\end{proof}

In addition to these operations, we show that languages are stable by intersection.

\begin{definition}
  The \emph{intersection} of two automata
  $A_1=(Q_1, I_1, F_1, E_1, s_1, t_1, \mu_1, \lambda_1)$ and
  $A_2=(Q_2, I_2, F_2, E_2, s_2, t_2, \mu_2, \lambda_2)$ on $(V, \Sigma)$,
  with $Q_1\cap Q_2=E_1\cap E_2=\emptyset$, is
  \begin{equation*}
    A_1\cap A_2=(Q, I, F, E, s, t, \mu, \lambda),
  \end{equation*}
  with
  \begin{gather*}
    Q = \{(q_1, q_2) \in Q_1 \times Q_2 \mid \mu_1(q_1) = \mu_2(q_2)\}
    \\
    I = Q \cap (I_1 \times I_2) \qquad F = Q \cap (F_1 \times F_2)
    \\
    E = \{(e_1, e_2)\in E_1\times E_2\mid \lambda_1(e_1)=\lambda_2(e_2)\}
    \\
    s((e_1, e_2)) = (s_1(e_1), s_2(e_2))
    \qquad
    t((e_1, e_2)) = (t_1(e_1), t_2(e_2))
    \\
    \mu((q_1, q_2))= \mu_1(q_1)
    \qquad
    \lambda((e_1, e_2)) = \lambda_1(e_1)
  \end{gather*}
\end{definition}

\begin{lemma}
  \label{le:kleene.intersection}
  For all automata $A_1$, $A_2$ on $(V, \Sigma)$, $L(A_1\cap A_2)=L(A_1)\cap L(A_2)$.
\end{lemma}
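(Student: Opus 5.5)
The plan is the standard product argument: prove both inclusions by translating accepting paths of $A_1\cap A_2$ into pairs of accepting paths of $A_1$ and $A_2$ with the same label, and back. Before that, we check that $A_1\cap A_2$ is a well-formed automaton in the sense of Def.~\ref{de:aut}. For a transition $(e_1,e_2)\in E$, we must see that its source $(s_1(e_1),s_2(e_2))$ and target lie in $Q$. This holds because $\mu_1(s_1(e_1))=d_0(\lambda_1(e_1))=d_0(\lambda_2(e_2))=\mu_2(s_2(e_2))$, and likewise for targets using $d_1$. The same computation shows $\mu(s((e_1,e_2)))=d_0(\lambda((e_1,e_2)))$ and the analogous identity for $t$, so the typing requirement holds.

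For $L(A_1\cap A_2)\subseteq L(A_1)\cap L(A_2)$, take an accepting path
\[
\pi=((q_0^1,q_0^2),(e_1^1,e_1^2),\dotsc,(e_n^1,e_n^2),(q_n^1,q_n^2)).
\]
Project it componentwise to $\pi_1=(q_0^1,e_1^1,\dotsc,e_n^1,q_n^1)$ and $\pi_2$ defined likewise. These are paths because $s$ and $t$ are defined componentwise. They are accepting because $I\subseteq I_1\times I_2$ and $F\subseteq F_1\times F_2$. Their labels agree with $\lambda(\pi)$: the edge labels agree since $\lambda_1(e_i^1)=\lambda_2(e_i^2)$, and the endpoint types agree by the definitions of $Q$ and $\mu$. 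The constant path case ($n=0$) is included: then the label is $\mu(q_0)$, which equals $\mu_1(q_0^1)=\mu_2(q_0^2)$.

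For the converse, let $x=(u,\omega,v)\in L(A_1)\cap L(A_2)$, witnessed by accepting paths $\pi_1$ in $A_1$ and $\pi_2$ in $A_2$ with $\lambda(\pi_1)=\lambda(\pi_2)=x$. Both paths then have length $|\omega|$, their $i$-th transitions carry the same letter $\omega_i$, and their first and last states have types $u$ and $v$. We zip them into a sequence of pairs. The intermediate state pairs lie in $Q$ because $\mu_1(t_1(e_i^1))=d_1(\omega_i)=\mu_2(t_2(e_i^2))$. The endpoint pairs lie in $Q$ because of the shared types $u$ and $v$, and hence lie in $I$ and $F$ respectively. The transition pairs lie in $E$ because the letters agree. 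The zipped sequence is therefore an accepting path of $A_1\cap A_2$ with label $x$.

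There is no real obstacle here. The only point needing care is the constant path case $\omega=\epsilon$, where equality of the labels $u=v$ is exactly what guarantees that $(q_0^1,q_0^2)$ lies in $Q$. This is why the typing information is kept in labels.
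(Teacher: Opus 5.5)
Your proof is correct and is the same product argument as the paper's, whose proof is just the observation that accepting paths of $A_1\cap A_2$ correspond exactly to pairs of accepting paths of $A_1$ and $A_2$ with equal labels; you spell out the details the paper leaves implicit, namely well-formedness of the product, equal path lengths, and the endpoint types. One cosmetic point: in the constant-path case, what puts $(q_0^1,q_0^2)$ in $Q$ is that both paths have the same type $u$ (your general endpoint argument already covers this), not the identity $u=v$ itself.
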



\begin{proof}
  A morphism $x$ is recognized by a path $\pi = ((q_0, q_0'), (e_1, e_1') \dotsc (q_n, q'_n))$
  iff $x$ is recognized by $\pi_1 = (q_0, e_1, \dotsc q_n)$ in $A_1$ and $\pi_2 = (q'_0, e'_1, \dotsc q'_n)$ in $A_2$.
  \qed
\end{proof}

\section{Myhill-Nerode Theorem}
\label{se:mn}

For $X\subseteq (V, \Sigma)^*$ and $w\in (V, \Sigma)^*$ we write
\begin{align*}
  w^{-1} X &= \{u\in (V, \Sigma)^*\mid d_0(u)=d_1(w), w u\in X\}, \\
  X w^{-1} &= \{u\in (V, \Sigma)^*\mid d_1(u)=d_0(w), u w\in X\}
\end{align*}
for the left and right quotients.
Further,
\begin{equation*}
  \suff(X)=\{w^{-1} X\mid w\in (V, \Sigma)^*\}, \qquad
  \pref(X)=\{X w^{-1}\mid w\in (V, \Sigma)^*\}.
\end{equation*}
We aim to show that $X$ is regular iff $\suff(X)$, or equivalently $\pref(X)$, is finite.

\begin{lemma}
  \label{le:reg->suff}
  For any finite automaton $A$ on $(V, \Sigma)$, $\suff(L(A))$ and $\pref(L(A))$ are finite.
\end{lemma}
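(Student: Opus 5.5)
The plan is the classical argument: a left quotient is determined by the set of states reachable by $w$, and $A$ has only finitely many such sets. Write $A=(Q, I, F, E, s, t, \mu, \lambda)$ and $X=L(A)$. For $w\in (V, \Sigma)^*$ let
\begin{equation*}
  R(w)=\{q\in Q\mid \text{there is a path } \pi \text{ in } A \text{ with } \src(\pi)\in I,\ \tgt(\pi)=q,\ \lambda(\pi)=w\}.
\end{equation*}
For $q\in Q$ let $L_q=\{\lambda(\pi)\mid \pi \text{ path in } A,\ \src(\pi)=q,\ \tgt(\pi)\in F\}$. I will show that
\begin{equation*}
  w^{-1}X=\bigcup_{q\in R(w)} L_q .
\end{equation*}
Then $w^{-1}X$ depends only on the subset $R(w)\subseteq Q$, so $|\suff(X)|\le 2^{|Q|}$.

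\emph{Typing conditions.} If $q\in R(w)$, then $\mu(q)=d_1(w)$, because the label of a path records the type of its target. Hence every $u\in L_q$ satisfies $d_0(u)=\mu(q)=d_1(w)$, so the typing condition in the definition of $w^{-1}X$ holds automatically on the right-hand side. If $R(w)=\emptyset$, the right-hand side is $\emptyset$, and the argument below shows the left-hand side is empty as well.

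\emph{Inclusion $\supseteq$.} Take $q\in R(w)$ and $u\in L_q$. Concatenating a path from $I$ to $q$ labelled $w$ with a path from $q$ to $F$ labelled $u$ gives an accepting path labelled $wu$. The concatenation of the labels is well defined since $d_1(w)=\mu(q)=d_0(u)$. This also covers the case where $u=\id_{\mu(q)}$ and $q\in F$, using a constant path.

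\emph{Inclusion $\subseteq$.} Let $u\in w^{-1}X$ and write $w=(v_0, \omega, v_1)$, $u=(v_1, \nu, v_2)$ with $|\omega|=n$. Take an accepting path $\pi=(q_0, e_1, \dotsc, e_m, q_m)$ with $\lambda(\pi)=wu=(v_0, \omega\nu, v_2)$. Then $m=n+|\nu|$, and we split $\pi$ at position $n$, at the state $q_n$.
\begin{itemize}
\item The prefix $(q_0, e_1, \dotsc, q_n)$ has label $(\mu(q_0), \omega, \mu(q_n))$.
\item We have $\mu(q_0)=v_0$ from the label of $\pi$.
\item For $\mu(q_n)$: if $n>0$ then $\mu(q_n)=d_1(\lambda(e_n))=d_1(w)$ by the consistency condition of Def.~\ref{de:aut}. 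If $n=0$, then $w=\id_{v_0}$ and $\mu(q_0)=v_0=d_1(w)$.
\item So the prefix witnesses $q_n\in R(w)$. Symmetrically the suffix has label $u$ and witnesses $u\in L_{q_n}$.
\end{itemize}
This splitting, together with the bookkeeping of the type annotations in the degenerate cases where $w$ or $u$ is an identity, is the only point needing care. It rests on the unique factorization of words in the free category $(V, \Sigma)^*$ together with the consistency condition of Def.~\ref{de:aut}.

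\emph{Right quotients.} For $\pref(X)$ the argument is dual. Let $C(w)$ be the set of states $q$ from which some path labelled $w$ reaches $F$, and let $K_q$ be the set of labels of paths from $I$ to $q$. Then $Xw^{-1}=\bigcup_{q\in C(w)} K_q$, which again gives at most $2^{|Q|}$ right quotients. Alternatively, one may apply the first part to the reversed automaton on the opposite graph $(V, \Sigma)$ with $d_0, d_1$ swapped.
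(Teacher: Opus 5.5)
Your proof is correct and follows the paper's argument. The paper likewise writes $w^{-1}L(A)$ as the union of the sets $L_{q\to}$ over the states $q$ with $w\in L_{\to q}$ (exactly your $R(w)$), which bounds the number of left quotients by $2^{|Q|}$, and it treats right quotients symmetrically. Your explicit splitting of the accepting path at position $n$, with the typing bookkeeping in the identity cases, fills in details the paper leaves implicit.
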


\begin{proof}
  We show the claim for $\suff(L(A))$; the proof for $\pref(L(A))$ is symmetric.
  Write $A=(Q, I, F, E, s, t, \mu, \lambda)$ and, for $q\in Q$,
  $L_{\to q}=\{\lambda(\pi)\mid \src(\pi)\in I, \tgt(\pi)=q\}$ and
  $L_{q\to}=\{\lambda(\pi)\mid \src(\pi)=q, \tgt(\pi)\in F\}$.
  Now for any $w\in (V, \Sigma)^*$,
  \begin{align*}
    u\in w^{-1} L(A) &\iff
    d_0(u)=d_1(w) \land w u\in L(A) \\
    &\iff
    \exists q\in Q: w\in L_{\to q}, u\in L_{q\to},
  \end{align*}
  hence $w^{-1} L(A)=\bigcup \{L_{q\to}\mid q\in Q, w\in L_{\to q}\}$.

  That is, $\suff(L(A))\subseteq \{\bigcup_{q\in Q'} L_{q\to}\mid Q'\subseteq Q\}$,
  and the latter set is finite.
  \qed
\end{proof}

For the other direction, which we again only show for $\suff(X)$,
we define a version of the standard Nerode congruence.
Fix a set $X\subseteq (V, \Sigma)^*$.

\begin{lemma}
  \label{le:rat->d0fin}
  If $X$ is rational, then $d_0(X)$ is finite.
\end{lemma}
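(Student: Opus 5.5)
We proceed by structural induction on the definition of rational sets, proving the slightly stronger statement that both $d_0(X)$ and $d_1(X)$ are finite; the claim for $d_1$ is needed because concatenation and Kleene plus can mix the two ends.

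For the basic sets, $d_0(\emptyset)=d_1(\emptyset)=\emptyset$ and $d_0(\{(d_0(a),a,d_1(a))\})=\{d_0(a)\}$, with the symmetric statement for $d_1$, so both are finite. For unions, $d_i(Y\cup Z)=d_i(Y)\cup d_i(Z)$ for $i=0,1$, which is finite by the induction hypothesis. For concatenation, every element of $YZ$ has the form $(u,\omega\nu,w)$ with $(u,\omega,v)\in Y$ and $(v,\nu,w)\in Z$. Hence $d_0(YZ)\subseteq d_0(Y)$ and $d_1(YZ)\subseteq d_1(Z)$, and both are finite. For Kleene plus, an easy induction on $n$ using the concatenation case gives $d_0(Y^n)\subseteq d_0(Y)$ and $d_1(Y^n)\subseteq d_1(Y)$ for all $n\ge1$. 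Since $d_i$ commutes with arbitrary unions, $d_0(Y^+)\subseteq d_0(Y)$ and $d_1(Y^+)\subseteq d_1(Y)$.

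There is no real obstacle here. The only point requiring care is to remember that $X^0$ is excluded, which is exactly why the Kleene plus, unlike a Kleene star, keeps the endpoint sets bounded.

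An alternative route uses the Kleene theorem: a rational $X$ equals $L(A)$ for some finite automaton $A$, and $d_0(L(A))\subseteq\mu(I)$, which is finite. The direct induction avoids any dependence on the automaton constructions, so we prefer it.
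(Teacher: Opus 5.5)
Your proof is correct and is the paper's argument: the paper only says ``by induction on the definition of rationality,'' and you have written out the cases. The simultaneous claim for $d_1$ is harmless but not actually needed, since $d_0(YZ)\subseteq d_0(Y)$ and $d_0(Y^{n+1})=d_0(YY^n)\subseteq d_0(Y)$ use only the $d_0$ hypothesis.
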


\begin{proof}
  By induction on the definition of rationality.
  \qed
\end{proof}

\begin{definition}
  The \emph{Nerode congruence} induced by $X$ is the equivalence relation $\sim_X$ on $(V, \Sigma)^*$
  given by
  \begin{equation*}
    x\sim_X y \iff x^{-1} X=y^{-1} X.
  \end{equation*}
\end{definition}

We now construct an automaton on $(V, \Sigma)$ that recognizes $X$.
It
will be the $\sim_X$-quotient of a restriction of the unfolding of the universal automaton $(V, \Sigma)$.

\begin{definition}
  \label{de:univ}
  The \emph{universal tree} $U=(Q, I, F, E, s, t, \mu, \lambda)$ on $(V, \Sigma)$
  is the automaton given by
  \begin{gather*}
    Q=(V, \Sigma)^*, \qquad I=V, \qquad F=Q, \\
    E=\{(x, a, x a)\mid x\in Q, a\in \Sigma, d_1(x)=d_0(a)\}, \\
    s((x, a, x a))=x, \qquad t((x, a, x a))=x a, \\
    \mu(x)=d_1(x), \qquad \lambda((x, a, x a))=a.
  \end{gather*}
\end{definition}

The automaton $U$ is the unfolding of $(V, \Sigma)$, but we shall neither prove nor make use of that fact here.

\begin{lemma}
  \label{le:unfolding_source}
  For any accepting path $\pi$ in $U$,
  $\src(\pi)=d_0(\lambda(\pi))$ and $\tgt(\pi)=\lambda(\pi)$.
\end{lemma}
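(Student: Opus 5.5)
We prove both equalities at once, by induction on the length $n$ of the path $\pi=(q_0, e_1, q_1,\dotsc, e_n, q_n)$. Since $\pi$ is accepting, $q_0\in I=V$, so $q_0=\id_u$ for some $u\in V$. We show the stronger claim that for \emph{any} path $\pi$ in $U$ starting in some $\id_u\in I$, we have $\src(\pi)=\id_u=d_0(\lambda(\pi))$ and $\tgt(\pi)=\lambda(\pi)$. Here vertices are identified with identity morphisms, as agreed in Section~\ref{se:kleene}. Acceptance imposes no further constraint because $F=Q$.

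For the base case $n=0$, $\pi=(q_0)$ with $q_0=\id_u$. Then $\lambda(\pi)=\mu(q_0)=d_1(\id_u)=u$, which under the identification is the identity morphism $(u,\epsilon,u)=\id_u=q_0$. Moreover $d_0(\lambda(\pi))=u=\src(\pi)$.

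For the inductive step, let $\pi'$ be the prefix of $\pi$ up to $q_{n-1}$. By induction, $q_{n-1}=\lambda(\pi')=(u,\omega,v)$ with $v=d_1(q_{n-1})=\mu(q_{n-1})$. The last transition has the form $e_n=(q_{n-1}, a, q_{n-1}a)$ with $d_0(a)=d_1(q_{n-1})$, so $q_n=q_{n-1}a=(u,\omega a,d_1(a))$. On the other hand,
\[
\lambda(\pi)=(\mu(q_0),\omega a,\mu(q_n))=(u,\omega a,d_1(q_n))=(u,\omega a,d_1(a)),
\]
which equals $q_n$. The source component is unchanged, so $d_0(\lambda(\pi))=u=\src(\pi)$.

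There is no real obstacle here. The only care needed is bookkeeping: keeping the identification of $V$ with identity morphisms consistent, and checking that the label's endpoint types are computed from $\mu$ in the way the definition of $\lambda(\pi)$ prescribes.
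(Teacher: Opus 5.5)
Your proof is correct and follows the paper's argument: induction on the length of the path, using $I=V$ (initial states are identities) for the base case and the shape $(x,a,xa)$ of transitions in $U$ for the inductive step. The only cosmetic difference is that the paper proves $\src(\pi)=d_0(\lambda(\pi))$ directly, without induction, from $\mu(\src(\pi))=d_0(\lambda(\pi))$ and $\mu(v)=v$ for $v\in V$, whereas you carry this claim along inside the induction.
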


\begin{proof}
  For the first claim, we have $\mu(\src(\pi))=d_0(\lambda(\pi))$ by definition,
  and as $\src(\pi)\in V$, $\mu(\src(\pi))=\src(\pi)$.
  For the second claim, we use induction on the length of $\pi$.
  If $\pi=(q)$ is constant, then $q\in V$,
  so $\lambda(\pi)=\mu(q)=q=\tgt(\pi)$.
  If $\pi=\psi e$ for $e\in E$,
  then $\lambda(\pi)=\lambda(\psi) \lambda(e)$,
  and $\lambda(\psi)=\tgt(\psi)=s(e)$ by induction.
  By construction, $e=(s(e), \lambda(e), s(e) \lambda(e))$,
  hence $\tgt(\pi)= s(e) \lambda(e)=\lambda(\psi) \lambda(e)=\lambda(\pi)$.
  \qed
\end{proof}

\begin{restatable}{lemma}{universaltree}
  $L(U)=(V, \Sigma)^*$.
\end{restatable}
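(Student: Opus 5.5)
We prove the two inclusions separately. The inclusion $L(U)\subseteq (V, \Sigma)^*$ is immediate: every label of a path is a morphism of the free category, because of the typing condition $\mu(s(e))=d_0(\lambda(e))$, $\mu(t(e))=d_1(\lambda(e))$ in Def.~\ref{de:aut}. We should still check that $U$ really is an automaton in this sense. For $e=(x, a, xa)$ we have $\mu(x)=d_1(x)=d_0(a)$ by the side condition in $E$, and $\mu(xa)=d_1(xa)=d_1(a)$.

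For the converse inclusion $(V, \Sigma)^*\subseteq L(U)$, fix a morphism $x=(u, \omega, v)$ with $\omega=a_1\dotsm a_n$. We build an accepting path $\pi_x$ in $U$ with $\lambda(\pi_x)=x$ by induction on $n$.

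\emph{Base case $n=0$.} Here $x=\id_u$, and we take the constant path $(u)$. Its source $u$ lies in $I=V$ and its target lies in $F=Q$, and its label is $\mu(u)=d_1(\id_u)=u$, which is identified with $\id_u$.

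\emph{Inductive step.} Write $x=y a_n$ with $y=(u, a_1\dotsm a_{n-1}, d_0(a_n))$. By induction there is an accepting path $\pi_y$ with label $y$. By Lemma~\ref{le:unfolding_source}, its target is $\tgt(\pi_y)=y$. Since $d_1(y)=d_0(a_n)$, the transition $e=(y, a_n, y a_n)\in E$ exists. The extended path $\pi_y e$ starts in $I$ and ends in $ya_n=x\in Q=F$, so it is accepting. Its label is $\lambda(\pi_y)\lambda(e)=y a_n=x$.

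The only point needing care is bookkeeping: identities must be identified with vertices, so that $I=V\subseteq Q$ makes sense, and we must verify that the typing components $d_0, d_1$ of the label agree. Both follow directly from $\mu(q)=d_1(q)$ and $\mu(u)=u$ for $u\in V$. Otherwise the argument is a routine induction, with no real obstacle expected.
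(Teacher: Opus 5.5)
Your proposal is correct and follows essentially the same route as the paper. Both proofs treat $L(U)\subseteq (V,\Sigma)^*$ as trivial and prove the converse by induction on length: the constant path $(u)$ handles identities, and the inductive step uses Lemma~\ref{le:unfolding_source} to place the target at $y$ and appends the transition $(y, a, ya)$, with acceptance coming from $F=Q$. Your extra check that $U$ satisfies the typing condition of an automaton is a harmless addition that the paper leaves implicit.
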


Next we restrict $U$ to recognize $X$.
Define an automaton $U\rest{X}=(Q, I, F, E, s, t, \mu, \lambda)$ on $(V, \Sigma)$ like in Def.~\ref{de:univ},
but with $Q=\{x\in (V, \Sigma)^*\mid x^{-1} X\ne \emptyset\}$, $I=V\cap d_0(X)$, and $F=X$.
Hence the states of $U\rest{X}$ are prefixes of morphisms in $X$,
and $x\in Q$ is accepting iff $x\in X$, \ie~$d_0(x)\in x^{-1} X$.

\begin{restatable}{lemma}{urestx}
  \label{le:urestx}
  $L(U\rest{X})=X$.
\end{restatable}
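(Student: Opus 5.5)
The plan is to show both inclusions, following Lem.~\ref{le:unfolding_source} closely. The first step is to check that $U\rest{X}$ is a well-defined automaton, i.e.\ that $I\subseteq Q$ and $F\subseteq Q$. Transitions are those of Def.~\ref{de:univ} restricted to pairs of states in $Q$, namely $(x,a,xa)$ with $x,xa\in Q$.

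For $v\in V\cap d_0(X)$, pick $y\in X$ with $d_0(y)=v$. Since $\id_v y=y\in X$, we get $y\in \id_v^{-1}X$, so $\id_v\in Q$. For $x\in F=X$, the identity $\id_{d_1(x)}$ lies in $x^{-1}X$, so $x\in Q$. More generally, $Q$ is prefix-closed: if $xa\in Q$ with witness $u$, then $au\in x^{-1}X$, so $x\in Q$. This observation will be used in the second inclusion.

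Next, Lem.~\ref{le:unfolding_source} transfers verbatim to $U\rest{X}$, because its proof only used that initial states are vertices and that transitions have the form $(x,a,xa)$. Hence every path $\pi$ with $\src(\pi)\in I$ satisfies $\tgt(\pi)=\lambda(\pi)$. This gives $L(U\rest{X})\subseteq X$: if $\pi$ is accepting, then $\lambda(\pi)=\tgt(\pi)\in F=X$.

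For $X\subseteq L(U\rest{X})$, take $x=(u,a_1\dotsm a_n,v)\in X$ and set $x_i=(u,a_1\dotsm a_i,d_1(a_i))$, with $x_0=\id_u$. Each $x_i$ is a prefix of $x$, so $x_i\in Q$: the suffix $(d_1(a_i),a_{i+1}\dotsm a_n,v)$ lies in $x_i^{-1}X$. Moreover $x_0=u\in V\cap d_0(X)=I$, and each $(x_{i-1},a_i,x_i)$ is a transition with both endpoints in $Q$. The path $(x_0,e_1,x_1,\dotsc,e_n,x_n)$ therefore ends in $x_n=x\in F$ and has label $x$.

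The only delicate point is the identification of vertices with identity morphisms, which is what makes $I\subseteq Q$ and the constant-path case work. There is no real obstacle here.
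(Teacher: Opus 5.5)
Your proof is correct and takes the paper's route. The inclusion $L(U\rest{X})\subseteq X$ goes via Lem.~\ref{le:unfolding_source}, and the converse walks along the prefixes of $x$: the paper phrases this step as a maximal-prefix contradiction argument, while you write the path down directly. Your explicit checks that $I,F\subseteq Q$, that $Q$ is prefix-closed, and that Lem.~\ref{le:unfolding_source} only needs $\src(\pi)\in I$ fill in points the paper leaves implicit.
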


\begin{proof}
  Let $x \in L(U\rest{X})$.
  There is an accepting path $\pi$ in $U\rest{X}$ with $\lambda(\pi) = x$ and $\tgt(\pi) \in X$.
  By Lemma~\ref{le:unfolding_source}, $\tgt(\pi) = x$.
  Thus, $x \in X$.

  Reciprocally, let $x \in X$.
  Let $x'$ be the biggest prefix of $x$ such that there is a path $\pi'$ in $U\rest{X}$ with $\lambda(\pi') = x'$.
  Suppose that $x' \neq x$ and denote $x = x'ax''$.
  We have that $(x'a)^{-1}X \neq \emptyset$, as it contains $x''$.
  Then $x'a \in Q$ and $(x', a, x'a) \in E$.
  $\pi = \pi'(x',a,x'a)$ is a path in $U\rest{X}$ with $\lambda(\pi)$ a prefix of $x$ bigger than $x'$, a contradiction.
  Hence $x = x'$, and $\tgt(\pi') = \lambda(\pi') = x$ and since $x \in X$
  and  $X = F$, $\pi'$ is accepting and $x \in L(U \rest{x})$.
  \qed
\end{proof}

Finally, we quotient $U\rest{X}$ by $\sim_X$.
We need a lemma that this is well-defined:

\begin{lemma}
  \label{le:mn}
  Let $x, y\in (V, \Sigma)^*$ such that $x\sim_X y$ and $y^{-1} X\ne \emptyset$.
  Then $x^{-1} X\ne \emptyset$, $d_1(x)=d_1(y)$,
  and for all $a\in \Sigma$ with $d_0(a)=d_1(x)$, $x a\sim_X y a$.
\end{lemma}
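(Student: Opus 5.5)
The three claims will be established in the order stated, each directly from the definition of the left quotient $w^{-1}X=\{u\mid d_0(u)=d_1(w),\ wu\in X\}$.

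\emph{Non-emptiness.} Since $x\sim_X y$ means $x^{-1}X=y^{-1}X$, the assumption $y^{-1}X\ne\emptyset$ immediately gives $x^{-1}X\ne\emptyset$.

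\emph{Equality of targets.} This is the only place where the non-emptiness hypothesis is really used. Without it the claim fails: two morphisms with distinct targets can both have empty quotient. Pick $u\in y^{-1}X=x^{-1}X$. By definition of the two quotients, $d_0(u)=d_1(y)$ and $d_0(u)=d_1(x)$. Hence $d_1(x)=d_1(y)$.

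\emph{Compatibility with appending an edge.} Let $a\in\Sigma$ with $d_0(a)=d_1(x)=d_1(y)$, so that $xa$ and $ya$ are both defined. I show $(xa)^{-1}X=(ya)^{-1}X$ by proving both equal $a^{-1}(x^{-1}X)$, resp.\ $a^{-1}(y^{-1}X)$; the conclusion then follows from $x^{-1}X=y^{-1}X$. Let $u\in(V,\Sigma)^*$. Then
\[
u\in(xa)^{-1}X \iff d_0(u)=d_1(a)\land xau\in X.
\]
Since $d_0(au)=d_0(a)=d_1(x)$, we have $xau=x(au)\in X$ iff $au\in x^{-1}X$. The condition $d_0(u)=d_1(a)$ is exactly the typing condition in $a^{-1}(x^{-1}X)$. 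So $u\in(xa)^{-1}X$ iff $u\in a^{-1}(x^{-1}X)$, and the same computation with $y$ in place of $x$ gives the analogous equivalence for $ya$.

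The only care needed is bookkeeping of the typing conditions on $d_0$, $d_1$, and using associativity of concatenation in the free category $(V,\Sigma)^*$. I expect no real obstacle here.
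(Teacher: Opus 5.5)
Your proof is correct and follows the paper's argument essentially verbatim: non-emptiness from $x^{-1}X=y^{-1}X$, equality of targets via a witness $u$ in the common quotient, and the chain $(xa)^{-1}X=a^{-1}x^{-1}X=a^{-1}y^{-1}X=(ya)^{-1}X$. The only addition is that you verify the identity $(xa)^{-1}X=a^{-1}(x^{-1}X)$ explicitly, including its typing conditions, whereas the paper uses it without comment.
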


\begin{proof}
  First, $x^{-1} X=y^{-1} X\ne \emptyset$ as claimed.
  Take any $u\in y^{-1} X$, then $d_1(y)=d_0(u)=d_1(x)$ because $u\in x^{-1} X$.
  (If $y^{-1} X=\emptyset$, this argument would fail.)
  Now let $a\in \Sigma$ with $d_0(a)=d_1(y)$, then
  $(x a)^{-1} X=a^{-1} x^{-1} X=a^{-1} y^{-1} X=(y a)^{-1} X$.
  \qed
\end{proof}

Below, $[{\cdot}]$ denotes equivalence classes of $\sim_X$.

\begin{definition}
  The \emph{Myhill-Nerode automaton} of $X\subseteq (V, \Sigma)^*$
  is\/ $\MN(X)=
  (Q, I, F,$ $E, s, t,$ $\mu, \lambda)$ defined as follows:
  \begin{gather*}
    Q = \{x\in (V, \Sigma)^*\mid x^{-1} X\ne \emptyset\}_{/{\sim_X}} \\
    E = \{([x], a, [x a])\mid [x]\in Q, a\in \Sigma, d_0(a)=d_1(x)\} \\
    I = (V\cap d_0(X))_{/{\sim_X}} \qquad
    F = X_{/{\sim_X}} \\
    s(([x], a, [x a]))=[x] \qquad
    t(([x], a, [x a]))=[x a] \\
    \mu([x])=d_1(x) \qquad
    \lambda(([x], a, [x a]))=a
  \end{gather*}
\end{definition}

$\MN(X)$ is well-defined by Lem.~\ref{le:mn}.

\begin{restatable}{lemma}{mnlang}
  \label{le:mnlang}
  $L(\MN(X))=X$.
\end{restatable}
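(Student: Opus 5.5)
The plan is to prove both inclusions by relating paths in $\MN(X)$ to paths in $U\rest{X}$, whose language is $X$ by Lem.~\ref{le:urestx}. The key invariant is the following claim:
\begin{quote}
for every path $\pi$ in $\MN(X)$ with $\src(\pi)\in I$, writing $\src(\pi)=[v]$ with $v\in V\cap d_0(X)$, we have $\tgt(\pi)=[v\,\lambda(\pi)]$, where $v\,\lambda(\pi)$ is a well-typed morphism with $(v\,\lambda(\pi))^{-1}X\ne\emptyset$.
\end{quote}
Since $d_0(\lambda(\pi))=\mu([v])=d_1(v)=v$, we have $v\,\lambda(\pi)=\lambda(\pi)$. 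So the claim says $\tgt(\pi)=[\lambda(\pi)]$.

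The claim is proved by induction on the length of $\pi$. The constant case is immediate. For the step $\pi=\psi e$, the induction hypothesis gives $\tgt(\psi)=[\lambda(\psi)]$. By definition of $E$, $e=([x],a,[xa])$ for some representative $x$ with $[x]=[\lambda(\psi)]$. Lem.~\ref{le:mn} then gives $d_1(x)=d_1(\lambda(\psi))$ and $xa\sim_X\lambda(\psi)a$. Hence $\tgt(\pi)=[\lambda(\psi)a]=[\lambda(\pi)]$.

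This is the step I expect to be the main obstacle. The concern is that transitions are defined via arbitrary representatives, and Lem.~\ref{le:mn} needs $y^{-1}X\ne\emptyset$. This holds because all states of $\MN(X)$ have nonempty quotient, so every representative qualifies.

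For $L(\MN(X))\subseteq X$, take an accepting path $\pi$. By the claim, $\tgt(\pi)=[\lambda(\pi)]\in F=X_{/\sim_X}$, so $\lambda(\pi)\sim_X z$ for some $z\in X$. Then $\lambda(\pi)^{-1}X=z^{-1}X\ni\id_{d_1(z)}$, so $\lambda(\pi)\,\id=\lambda(\pi)\in X$.

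This uses the observation that $z\in X$ iff $\id_{d_1(z)}\in z^{-1}X$. Together with $d_1(z)=d_1(\lambda(\pi))$, which follows from Lem.~\ref{le:mn}, it makes the identities match.

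For $X\subseteq L(\MN(X))$, take $x\in X$. By Lem.~\ref{le:urestx} there is an accepting path in $U\rest{X}$ labelled $x$. Map each state $y$ of that path to $[y]$ and each edge $(y,a,ya)$ to $([y],a,[ya])$. These are legitimate states and edges of $\MN(X)$, since $y^{-1}X\ne\emptyset$ holds for states of $U\rest{X}$.

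The image is a path in $\MN(X)$ starting in $[d_0(x)]\in I$ and ending in $[x]\in F$, with the same label $x$. Hence $x\in L(\MN(X))$.
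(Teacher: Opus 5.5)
Your proof is correct, and the route differs from the paper's in one direction.

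For $L(\MN(X))\subseteq X$ you use the same idea as the paper, an identity in the quotient of an accepting representative. However, you prove the invariant $\tgt(\pi)=[\lambda(\pi)]$ for paths starting in $I$, which the paper only asserts (``We have $x\sim_X y$''). Your induction through Lem.~\ref{le:mn} justifies that step. So does your remark that every representative of a state has nonempty quotient.

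For $X\subseteq L(\MN(X))$ the approaches diverge:
\begin{itemize}
\item The paper repeats the maximal-prefix argument of Lem.~\ref{le:urestx} directly inside $\MN(X)$.
\item You reuse Lem.~\ref{le:urestx} and push an accepting path of $U\rest{X}$ forward along $y\mapsto[y]$. This map preserves initial and accepting states, types ($\mu([y])=d_1(y)$) and transitions.
\end{itemize}

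Your version is more modular. It also avoids two points the paper's argument leaves implicit. First, the prefix path $\pi'$ must start in the initial state $[d_0(x)]$. Second, its target must be $[x']$, which is again the invariant, before the edge $([x'],a,[x'a])$ can be appended. The paper's version has only the minor advantage of not depending on $U\rest{X}$.
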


\begin{lemma}
  \label{le:suff->reg}
  If $(V, \Sigma)$ is such that the subgraph induced by any two vertices is finite,
  and $\suff(X)$ is finite, then so is $\MN(X)$.
\end{lemma}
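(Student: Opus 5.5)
The plan is to show separately that the state set $Q$ of $\MN(X)$ is finite and that, given this, the transition set $E$ is finite.

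\emph{Finiteness of $Q$.} By definition, states of $\MN(X)$ are $\sim_X$-classes $[x]$ with $x^{-1}X\ne\emptyset$. The map $[x]\mapsto x^{-1}X$ is well defined and injective by the definition of $\sim_X$, and it lands in $\suff(X)$. Hence $|Q|\le|\suff(X)|<\infty$.

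\emph{Finiteness of $E$.} A transition has the form $([x],a,[xa])$. It is determined by its source state $[x]$ and its label $a$, since $[xa]$ is then fixed by Lem.~\ref{le:mn}. It therefore suffices to bound, for each state $[x]$, the number of labels $a$ that can occur on transitions out of $[x]$.

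A subtlety is that the definition of $E$ only requires $d_0(a)=d_1(x)$. If $(V,\Sigma)$ has a vertex of infinite out-degree, this would admit infinitely many transitions into $[xa]$ with $(xa)^{-1}X=\emptyset$, and such $[xa]$ is not even a state. The intended reading, which makes $E\subseteq Q\times\Sigma\times Q$ well typed, is that $[xa]\in Q$, i.e.\ $(xa)^{-1}X\ne\emptyset$. I would make this explicit at the start of the proof, noting that it is forced by $t(e)\in Q$.

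Under this reading, if $([x],a,[xa])\in E$, then $d_0(a)=d_1(x)$ and $d_1(a)=d_1(xa)$. Both are values of $\mu$ on states, so $a$ is an edge between two vertices of $\mu(Q)$. This set of vertices is finite because $Q$ is finite. Every such $a$ lies in the subgraph induced by $\{d_0(a),d_1(a)\}$, which by hypothesis is finite. There are only finitely many pairs of vertices from $\mu(Q)$, so only finitely many labels $a$ can occur. Thus $|E|\le|Q|\cdot|\{a\in\Sigma\mid d_0(a),d_1(a)\in\mu(Q)\}|<\infty$.

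The main obstacle is this transition-finiteness step. It requires interpreting $E$ so that targets lie in $Q$, and then seeing that the hypothesis on two-vertex induced subgraphs is exactly what bounds parallel edges once both endpoints range over the finite set $\mu(Q)$. The bound on states follows directly from $\suff(X)$ being finite.
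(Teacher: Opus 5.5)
Your proof is correct and follows the paper's argument: $Q$ is finite because $[x]\mapsto x^{-1}X$ injects it into $\suff(X)$. $E$ is finite because its labels are edges between the finitely many vertices of $\mu(Q)$, and each two-vertex induced subgraph contributes only finitely many of them. Your remark that $E$ must be read with $[xa]\in Q$ is a real gain in rigor: the paper leaves this implicit, and since the hypothesis does not bound out-degrees, the literal definition could yield infinitely many transitions.
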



\begin{proof}
  If $\suff(X)$ is finite then in $\MN(X)$ so is $Q$.
  If $(V, \Sigma)$ is such that the subgraph induced by any two vertices is finite,
  then also $E$ is finite.
  \qed
\end{proof}

\begin{example}
  Let $(V, \Sigma)$ be the graph with $V=\{u, v\}$ and $\Sigma=\{(u, a, v)\mid a\in \Nat\}$,
  and $X=(V, \Sigma)^*$.  Then $X=\Sigma$ and $\MN(X)=U_{(V, \Sigma)}$ which is infinite.
  This shows that the condition on $(V, \Sigma)$ in Lem.~\ref{le:suff->reg} is necessary.
  It is satisfied, for example, for the infinite ST-graph of Ex.~\ref{ex:sta}.
\end{example}

\begin{theorem}
  If $(V, \Sigma)$ is such that the subgraph induced by any two vertices is finite, then
  a set $X\subseteq (V, \Sigma)^*$ is regular iff $\suff(X)$ is finite, iff $\pref(X)$ is finite.
\end{theorem}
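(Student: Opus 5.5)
The theorem combines results already established, so the plan is to chain them, with one extra step for the prefix direction.

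For regular $\Rightarrow$ finite quotients: if $X$ is regular, fix a finite automaton $A$ with $L(A)=X$. Lemma~\ref{le:reg->suff} then gives directly that both $\suff(X)$ and $\pref(X)$ are finite. This direction needs no hypothesis on the graph.

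For finite $\suff(X)$ $\Rightarrow$ regular:
\begin{itemize}
\item Build $\MN(X)$, which is well defined by Lemma~\ref{le:mn}.
\item By Lemma~\ref{le:mnlang} it recognizes $X$.
\item By Lemma~\ref{le:suff->reg}, under the hypothesis that every two-vertex induced subgraph is finite, it is a finite automaton.
\end{itemize}
I would double-check the finiteness of $E$ in that lemma. Each state $[x]$ has a single type $d_1(x)$, by Lemma~\ref{le:mn}. Only finitely many states exist, so only finitely many types occur. Edges go between states of types $d_0(a)$ and $d_1(a)$, hence lie in finitely many two-vertex induced subgraphs, each of which is finite. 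The same argument covers $I$: it is a subset of $Q$, which is finite.

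For finite $\pref(X)$ $\Rightarrow$ regular, I would argue by duality rather than redo the construction.
\begin{itemize}
\item Define the opposite graph $(V,\Sigma)^{\mathrm{op}}$, swapping $d_0$ and $d_1$, and the reversal $X^{\mathrm{op}}=\{(v,\omega^R,u)\mid (u,\omega,v)\in X\}$.
\item Check that $(Xw^{-1})^{\mathrm{op}}=(w^{\mathrm{op}})^{-1}X^{\mathrm{op}}$, so $\pref(X)$ is in bijection with $\suff(X^{\mathrm{op}})$.
\item The two-vertex finiteness condition is symmetric, so it holds for the opposite graph too.
\item The suffix case then makes $X^{\mathrm{op}}$ regular over $(V,\Sigma)^{\mathrm{op}}$.
\item Reversing a finite automaton (swap $s,t$ and swap $I,F$, keeping $\mu,\lambda$) gives a finite automaton on $(V,\Sigma)$ recognizing $X$.
\end{itemize}

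The main obstacle is only bookkeeping: verifying the reversal identities carefully, including the typing conditions $d_1(u)=d_0(w)$ versus $d_0(u)=d_1(w)$. The finiteness of $E$ also hinges on the type of a state being well defined, which is why Lemma~\ref{le:mn} restricts to classes with nonempty quotient.
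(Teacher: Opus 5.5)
Your proposal is correct and follows the paper's proof, which simply chains Lemmas~\ref{le:reg->suff}, \ref{le:mnlang} and~\ref{le:suff->reg}. You go beyond the paper in two places. First, you spell out the prefix case through the opposite graph and reversal, where the paper only calls it symmetric. Second, you re-check that $E$ in $\MN(X)$ is finite; your argument there correctly assumes that an edge $([x],a,[xa])$ is only included when $[xa]\in Q$, a restriction the paper's definition leaves implicit.
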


\begin{proof}
  By Lemmas \ref{le:reg->suff}, \ref{le:mnlang}, and \ref{le:suff->reg}.
  \qed
\end{proof}

\section{Determinism}
\label{se:determinism}

\begin{definition}
  An automaton $A=(Q, I, F, E, s, t, \mu, \lambda)$ on $(V, \Sigma)$ is \emph{deterministic}
  if\/ $|\{q\in I\mid \mu(q)=v\}|\le 1$ for each $v\in V$,
  and for all $q\in Q$ and $a\in \Sigma$,
  $|\{e\in E\mid s(e)=q, \lambda(e)=a\}|\le 1$.
\end{definition}

That is, every state has at most one outgoing transition per label,
and there is at most one initial state per type.
The latter is needed as replacement of the usual demand that $|I|=1$,
as each morphism $w\in L(A)$ needs an initial state of type $d_0(w)$ in order to be accepted.

\begin{restatable}{lemma}{det}
  \label{le:det}
  The automaton $U$ of Sect.~\ref{se:mn} is deterministic,
  as are the automata $U\rest{X}$ and $\MN(X)$ for every $X\subseteq (V, \Sigma)^*$.
\end{restatable}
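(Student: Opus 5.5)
The plan is to check the two conditions of determinism directly for each of the three automata: at most one initial state per type, and at most one outgoing transition per state and label. In all three constructions, transitions are given as triples determined by their source and label. So the second condition reduces to showing that the target is a function of the source and the label.

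\emph{The universal tree $U$.} Here $I=V$, viewed as identities, and $\mu(\id_v)=d_1(\id_v)=v$. Hence for each $v\in V$ the set $\{q\in I\mid \mu(q)=v\}$ is exactly $\{\id_v\}$, which has one element. For transitions, any $e\in E$ with $s(e)=x$ and $\lambda(e)=a$ has the form $(x,a,xa)$. Such an $e$ is therefore uniquely determined as the triple $(x,a,xa)$, because concatenation in $(V,\Sigma)^*$ is a function. So there is at most one such transition.

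\emph{The restriction $U\rest{X}$.} This automaton has the same shape with smaller sets of states, initial states and transitions, and $\mu$ is unchanged. Both conditions are upper bounds, so they are inherited by subsets: $I=V\cap d_0(X)\subseteq V$, and $E$ is contained in the transition set of $U$.

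\emph{The Myhill--Nerode automaton $\MN(X)$.} This is the only case that needs care. Consider first the initial states. Suppose $[\id_u]$ and $[\id_v]$ are initial with $\mu([\id_u])=\mu([\id_v])$. Since $\mu$ is well defined on classes by Lem.~\ref{le:mn}, this gives $u=v$, and so the two classes coincide. Now consider transitions. A transition with source $[x]$ and label $a$ is $([x'],a,[x'a])$ for some representative $x'\sim_X x$ with $x'^{-1}X\neq\emptyset$. By Lem.~\ref{le:mn} we have $x'a\sim_X xa$, so the transition equals $([x],a,[xa])$. Hence it is unique.

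The main obstacle is this last step: making sure that transitions of $\MN(X)$ are defined independently of the chosen representative. This is exactly the content of Lem.~\ref{le:mn}, and the hypothesis $x^{-1}X\ne\emptyset$ in that lemma is guaranteed by the definition of $Q$.
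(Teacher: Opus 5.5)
Your proposal is correct and follows essentially the same route as the paper. Both arguments check the two determinism conditions directly for the universal tree, pass them to the restriction by inclusion, and handle the Myhill--Nerode automaton by noting that each initial type and each source--label pair determines at most one class. The only difference is that you invoke Lem.~\ref{le:mn} explicitly to show that $\mu$ and the transition targets do not depend on the chosen representative, whereas the paper leaves this implicit: it says $I$ is a quotient of the initial states of the restricted tree and bounds the transitions from $[x]$ labeled $a$ by the singleton $\{([x], a, [xa])\}$.
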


We have shown that automata over $(V, \Sigma)$ are determinizable.
Specializing to the finite case, we get

\begin{theorem}
  If $(V, \Sigma)$ is such that the subgraph induced by any two vertices is finite, then
  for any regular set $X\subseteq (V, \Sigma)^*$
  there is a deterministic finite automaton over $(V, \Sigma)$ which recognizes $X$.
\end{theorem}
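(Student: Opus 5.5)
The plan is to take the Myhill-Nerode automaton $\MN(X)$ itself as the witness, so that no subset construction is needed.

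First, since $X$ is regular, it is recognized by some finite automaton $A$ on $(V, \Sigma)$, so $X=L(A)$. By Lem.~\ref{le:reg->suff}, $\suff(X)$ is finite. The hypothesis on $(V, \Sigma)$ is exactly the one of Lem.~\ref{le:suff->reg}, which therefore gives that $\MN(X)$ is finite. More precisely, its states are equivalence classes of $\sim_X$ indexed by nonempty left quotients, and there are finitely many of these.

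For the transitions, note that every state has type $d_1(x)$ for some state $[x]$, so only finitely many vertices occur as types. Every transition goes between two such vertices, and each pair induces a finite subgraph, so there are only finitely many transitions.

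For the initial states, the relevant lemma needs some care. The set $I=(V\cap d_0(X))_{/\sim_X}$ is a subset of the finite set $Q$, so it is finite as well. This holds even though $d_0(X)$ itself is finite by Lem.~\ref{le:rat->d0fin} together with the Kleene theorem, which we do not actually need here.

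Next, Lem.~\ref{le:mnlang} gives $L(\MN(X))=X$, and Lem.~\ref{le:det} states that $\MN(X)$ is deterministic. Together these yield a deterministic finite automaton recognizing $X$.

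The only step that needs checking is that determinism really holds in the sense of the definition, namely at most one initial state per type. If $u, v\in V\cap d_0(X)$ are distinct vertices, then the classes $[u]$ and $[v]$ have different types. If $u=v$, they are the same class. So there is at most one initial state per type. Similarly, for a state $[x]$ and a label $a$, the unique outgoing transition is $([x], a, [xa])$, and it is well defined by Lem.~\ref{le:mn}. These facts are presumably exactly the content of Lem.~\ref{le:det}, so the proof reduces to chaining the lemmas.
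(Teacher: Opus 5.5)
Your proposal is correct and matches the paper's proof. The paper also takes $\MN(X)$ as the witness: it gets finiteness from Lem.~\ref{le:reg->suff} and Lem.~\ref{le:suff->reg}, the language from Lem.~\ref{le:mnlang}, and determinism from Lem.~\ref{le:det}, and your extra checks on transitions, initial states and determinism just unpack what those lemmas already provide.
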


\begin{proof}
  By Lem.~\ref{le:det}, $\MN(X)$ is deterministic; $X$ being regular, it is also finite.
  \qed
\end{proof}

We now give another proof of the above theorem,
showing that a variation of the standard subset construction applies in our setting.
This allows us to bypass the condition on $(V, \Sigma)$.

\begin{definition}
  Let $A=(Q, I, F, E, s, t, \mu, \lambda)$ be an automaton over $(V, \Sigma)$.
  The \emph{subset automaton} of $A$ is $2^A=(Q', I', F', E', s', t', \mu', \lambda')$ given by
  \begin{gather*}
    Q'=\{R\subseteq Q\mid \forall q, r\in R: \mu(q)=\mu(r)\}, \\
    I'=\{R\in Q'\mid R\subseteq I\}, \qquad
    F'=\{R\in Q'\mid R\cap F\ne \emptyset\}, \\
    E'=\{(R, a, S)\mid R, S\in Q', S = \{q\mid \exists e\in E: s(e)\in R, \lambda(e)=a, t(e) = q\}\} \\
    s'((R, a, S))=R, \qquad t'((R, a, S))=S, \\
    \{\mu'(R)\}=\{\mu(q)\mid q\in R\}, \qquad
    \lambda'((R, a, S))=a.
  \end{gather*}
\end{definition}

That is, we collect the states of $Q$ into subsets of the same type;
the definition of $\mu'(R)$ above makes sense precisely because $\{\mu(q)\mid q\in R\}$
is a one-element set for every $R\in Q'$.
Given that we only add edges for labels which are present in $A$,
$2^A$ is finite if $A$ is, even in case
$(V, \Sigma)$ admits infinite two-vertex induced subgraphs.

\begin{restatable}{lemma}{propdet}
  $2^A$ is deterministic and $L(2^A)=L(A)$.
  If $A$ is finite, then so is $2^A$.
\end{restatable}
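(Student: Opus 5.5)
The plan is to verify the three claims separately: determinism, preservation of the language, and finiteness. Language preservation is the real content, and it follows the classical argument that reachable subsets are exactly the sets of states reachable by a given word.

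Before that, I will fix the reading of the definition so that $2^A$ is a well-formed automaton. First, $\mu'(R)$ is only defined when $R\ne\emptyset$, so I take $Q'$ to consist of nonempty same-type subsets. Accordingly, $E'$ contains $(R,a,S)$ only when $S\ne\emptyset$; dropping transitions into the empty set does not change the language, since no such path can reach $F'$. Second, the set $I'$ as literally written contains every subset of $I$, and with it $2^A$ would not be deterministic. I therefore take $I'=\{I_v\mid v\in V, I_v\ne\emptyset\}$ with $I_v=\{q\in I\mid \mu(q)=v\}$, which is clearly the intended definition.

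\textbf{Well-formedness and determinism.} For $(R,a,S)\in E'$ with $S\ne\emptyset$, every $q\in S$ is the target of some $e$ with $\lambda(e)=a$, so $\mu(q)=d_1(a)$. Hence $S\in Q'$ and $\mu'(S)=d_1(a)$. Likewise $\mu'(R)=d_0(a)$, because some $q\in R$ is the source of an $a$-labelled transition. This gives the typing condition of Def.~\ref{de:aut}. Determinism is immediate: $S$ is a function of $(R,a)$, so each state has at most one outgoing transition per label. The sets $I_v$ are distinct for distinct $v$, so there is at most one initial state per type.

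\textbf{Language equality.} For $R\in Q'$ and a morphism $w=(\mu'(R),\omega,v)$, let $\delta(R,w)$ be the set of targets of paths in $A$ that start in $R$ and are labelled $w$. By induction on the length of $\omega$, I will show that there is a path in $2^A$ from $R$ labelled $w$ iff $\delta(R,w)\ne\emptyset$, and that its target is then exactly $\delta(R,w)$. In the base case, the constant path gives $\delta(R,\id_u)=R$. The inductive step $w a$ is exactly the definition of $E'$, and the required nonemptiness of intermediate sets holds because $\delta$ is monotone in prefixes.

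With this in hand, $(u,\omega,v)\in L(A)$ iff $\delta(I_u,(u,\omega,v))\cap F\ne\emptyset$. Here I use that every accepting path of $A$ with label starting at $u$ begins in $I_u$, because its start state must have type $u$. This condition is equivalent to the existence of an accepting path of $2^A$ from $I_u$. Since $I_u$ is the only initial state of type $u$, this gives $L(2^A)=L(A)$.

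\textbf{Finiteness.} If $Q$ and $E$ are finite, then $Q'\subseteq 2^Q$ is finite. Each transition $(R,a,S)\in E'$ has $a=\lambda(e)$ for some $e\in E$, so the labels range over the finite set $\lambda(E)$. By determinism there is at most one transition per pair $(R,a)$, hence $|E'|\le|Q'|\cdot|E|$.

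The main obstacle is not any computation but fixing the definition of $I'$ and the treatment of $\emptyset$ so that determinism actually holds. Once that is done, the induction is routine.
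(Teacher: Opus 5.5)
Your proof is correct and is essentially the paper's subset-construction argument: uniqueness of the $a$-successor set gives determinism, induction on path length gives $L(2^A)=L(A)$, and $|Q'|\le 2^{|Q|}$ gives finiteness; your exact invariant $\delta(R,w)$ simply merges the paper's two inclusion inductions into one. Your repairs to the definition are not cosmetic, because the paper's proof checks only the transition clause of determinism, while the literal $I'=\{R\in Q'\mid R\subseteq I\}$ already has several initial states of one type once some $I_v$ has two elements, and the literal $Q'$ and $E'$ contain $\emptyset$, where $\mu'$ is undefined and transitions $(R,a,\emptyset)$ break typing and finiteness; taking $I'=\{I_v\mid I_v\neq\emptyset\}$ and discarding $\emptyset$ is exactly what makes the lemma true.
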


\begin{corollary}
  For any regular set $X\subseteq (V, \Sigma)^*$
  there is a deterministic finite automaton over $(V, \Sigma)$ which recognizes $X$.
\end{corollary}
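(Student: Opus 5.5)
The corollary follows by combining the preceding lemma on $2^A$ with the definition of regularity. Since $X$ is regular, there is a finite automaton $A=(Q, I, F, E, s, t, \mu, \lambda)$ on $(V, \Sigma)$ with $L(A)=X$. I will take the subset automaton $2^A$ as the witness. The lemma gives that $2^A$ is deterministic, that $L(2^A)=L(A)=X$, and that $2^A$ is finite because $A$ is. This argument uses no condition on $(V, \Sigma)$, unlike the Myhill-Nerode route, where Lem.~\ref{le:suff->reg} needed finite two-vertex induced subgraphs.

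The substance therefore lies in the lemma itself, which I would check as follows. Finiteness is immediate: $Q'\subseteq 2^Q$, and $E'$ contains at most one edge per pair $(R,a)$. Moreover, only labels $a$ occurring in $E$ yield a nonempty $S$. To keep $E'$ finite I would read the definition as excluding edges with $S=\emptyset$; otherwise the dead state $\emptyset$ would need a type and would contribute one edge per label.

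Determinism holds because the target $S$ is uniquely determined by $R$ and $a$. For initial states, I would restrict $I'$ to the sets $I_v=\{q\in I\mid \mu(q)=v\}$, one for each $v\in V$; this is the intended reading, since otherwise subsets of $I$ of the same type would violate the condition of at most one initial state per type. Each nonempty $S$ is a legitimate state of $Q'$: every $q\in S$ is the target of an edge labeled $a$, so $\mu(q)=d_1(a)$, and all elements of $S$ share a type. The typing conditions of Def.~\ref{de:aut} for $2^A$ hold for the same reason, since $\mu'(R)=d_0(a)$ for every edge leaving $R$.

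For the language equality, I would show by induction on the length of $\omega$ that the following holds for a morphism $(v,\omega,u)$. The unique path in $2^A$ from $I_v$ labeled $\omega$ exists and ends in the set $R_\omega$ of all states reachable in $A$ from $I_v$ along paths labeled $\omega$, whenever that set is nonempty. The base case $\omega=\epsilon$ gives $R_\epsilon=I_v$, so the constant paths also match. Acceptance of $(v,\omega,u)$ in $A$ then amounts to $R_\omega\cap F\ne\emptyset$, which is exactly $R_\omega\in F'$. The main obstacle I expect is handling these typing and emptiness details, namely the empty set and the choice of initial states, rather than the induction itself.
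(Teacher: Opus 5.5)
Your proposal is correct and follows the paper's route: the corollary is the lemma on $2^A$ applied to a finite automaton $A$ with $L(A)=X$. Your readings of the definition are the right repairs, provided you keep only the nonempty $I_v$. Read literally, $I'=\{R\in Q'\mid R\subseteq I\}$ admits several initial states of the same type. Also, $\emptyset\in Q'$ has no type and would receive an edge $(R,a,\emptyset)$ for every $a\in\Sigma$. The paper's proof of the lemma passes over both issues, since it checks only transition determinism and bounds $|Q'|$.
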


\section{Complementation}
\label{se:complementation}

We show that automata on $(V, \Sigma)$ can be completed.
If $(V, \Sigma)$ is infinite, then for obvious reasons regular languages over $(V, \Sigma)$ are not stable by complementation.
Otherwise, we show that the standard complementation procedure can be adapted.

\begin{definition}
 An automaton $A=(Q, I, F, E, s, t, \mu, \lambda)$ over $(V, \Sigma)$
 is said to be \emph{complete} if for all $v\in V$ there is $q \in I$ with $\mu(q) = v$,
 and for all $q \in Q$ and $a \in \Sigma$ with $d_0(a) = \mu(q)$, there exists $e \in E$ with $s(e) = q$ and $\lambda(e) = a$.
\end{definition}

\begin{definition}
  The \emph{completion} of an automaton $A=(Q, I, F, E, s, t, \mu, \lambda)$ over $(V, \Sigma)$
  is $A' = (Q', I', F', E', s', t', \mu', \lambda')$ with
  \begin{gather*}
    Q' = Q \cup \{q_{i,v} \mid \forall q \in I, \mu(q_i) \neq v\} \cup \{q_{\bot,v} \mid v \in V \}, \\
    I' = I \cup \{q_{i,v} \mid \forall q \in I, \mu(q_i) \neq v\}, \qquad
    F' = F, \\
    E' = E \cup E_\bot \textnormal{ with } E_\bot = \{(q, a, q_{\bot, v}) \mid d_1(a) = v, q \in Q', \forall e \in E, s(e) \neq q \land \lambda(e) \neq a\}, \\
    s'(e) =
    \begin{cases}
      s(e) \textnormal{ if } e \in E, \\
      q \textnormal{ if } e = (q, a, q'), \\
    \end{cases}
    \qquad
    t'(e) =
    \begin{cases}
      t(e) \textnormal{ if } e \in E, \\
      q' \textnormal{ if } e = (q, a, q'), \\
    \end{cases}\\
    \mu'(q) =
    \begin{cases}
      \mu(q) \textnormal{ if } q \in Q, \\
      v \textnormal{ if } q = q_{i,v}, \\
      v \textnormal{ if } q = q_{\bot,v}, \\
    \end{cases}
    \qquad
    \lambda'(e) =
    \begin{cases}
      \lambda(e) \textnormal{ if } e \in E, \\
      a \textnormal{ if } e = (q, a, q'). \\
    \end{cases}
  \end{gather*}
\end{definition}

\begin{lemma}
  \label{le:completion_coaccessible}
  Let $A'$ be the completion of $A$.
  Any state in $Q' \setminus Q$ is not co-accessible.
  Any accepting path $\pi$ in $A'$ is accepting in $A$, with $\lambda'(\pi) = \lambda(\pi)$.
\end{lemma}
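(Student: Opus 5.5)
The plan is to first pin down the structure of the new part $Q'\setminus Q$ and then deduce both claims from it. The completion adds two kinds of states, the fresh initial states $q_{i,v}$ and the sink states $q_{\bot,v}$, and by definition $F'=F\subseteq Q$, so none of the new states is accepting. The only new transitions are those in $E_\bot$, and each of them has as target a sink $q_{\bot,v}$. So the key observation is a closure property: from a sink state one can only reach sink states.

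For the first claim I will argue as follows. Every path that starts in some $q_{\bot,v}$ uses only transitions from $E'$ whose source is $q_{\bot,v}$. Such a state is not in $Q$, so no transition of $E$ leaves it, and every transition leaving it lies in $E_\bot$ and therefore ends in a sink. A short induction on path length then shows that every path starting at a sink ends at a sink, which is never in $F'$; hence the sinks are not co-accessible. For a state $q_{i,v}$, its outgoing transitions are again only in $E_\bot$ (it is not in $Q$), so every nonconstant path from it reaches a sink and stays among sinks by the previous argument. The constant path $(q_{i,v})$ is not accepting either, since $q_{i,v}\notin F'$. Thus no state of $Q'\setminus Q$ can reach $F'$.

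For the second claim, take an accepting path $\pi=(q_0,e_1,\dotsc,e_n,q_n)$ in $A'$. Every $q_j$ lies on a path to $q_n\in F'$, so every $q_j$ is co-accessible, and by the first claim $q_j\in Q$ for all $j$. Each $e_j$ then has target $q_j\in Q$, and since $E_\bot$-transitions target sinks, $e_j\in E$. Also $q_0\in I'$ together with $q_0\in Q$ forces $q_0\in I$, because the added initial states lie outside $Q$, and $q_n\in F'=F$. So $\pi$ is a path of $A$ that is accepting there. The labels agree because $\mu'$ and $\lambda'$ restrict to $\mu$ and $\lambda$ on $Q$ and $E$, giving $\lambda'(\pi)=\lambda(\pi)$.

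The main obstacle is just careful bookkeeping with the somewhat sloppy definition of $E_\bot$, in particular its quantifier condition. The only facts I actually use are that transitions in $E_\bot$ always target some $q_{\bot,v}$ and that new states have no outgoing $E$-transitions. Both hold regardless of how the guard condition is read, so the argument should go through unchanged.
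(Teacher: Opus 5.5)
Your proof is correct. The paper states this lemma without any proof and only invokes it in the appendix proof of the completion theorem, so there is no argument of the authors to compare against. Your argument is the routine one being taken for granted: sinks are closed under successors in $A'$, each $q_{i,v}$ leads only into sinks, and $F'=F\subseteq Q$. You then use the \emph{target} of each transition on an accepting path to place it in $E$, which is the right choice, since transitions in $E_\bot$ may have sources in $Q$.

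Like the paper, you implicitly assume that the added states $q_{i,v}, q_{\bot,v}$ are fresh (not in $Q$) and that $E_\bot\cap E=\emptyset$, which is harmless. You are also right that the garbled guard in $E_\bot$ does not affect this lemma. The same holds for the missing typing condition $d_0(a)=\mu'(q)$.
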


%
%

\begin{restatable}{theorem}{completion}
  Any automaton $A$ on $(V, \Sigma)$ may be completed into $A'$
  with $L(A') = L(A)$.
\end{restatable}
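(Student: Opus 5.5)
The plan is to take $A'$ to be the completion of $A$ from the preceding definition and to prove two things: $A'$ is complete, and $L(A')=L(A)$. The definition contains some notational slips, so I first fix the intended reading. The new initial states are $q_{i,v}$ for those $v\in V$ such that no $q\in I$ has $\mu(q)=v$. The sink transitions $E_\bot$ are the triples $(q,a,q_{\bot,d_1(a)})$ with $q\in Q'$, $d_0(a)=\mu'(q)$, and no $e\in E$ satisfying both $s(e)=q$ and $\lambda(e)=a$. The typing side-condition $d_0(a)=\mu'(q)$ is needed for $A'$ to be a well-formed automaton in the sense of Def.~\ref{de:aut}. With this reading, the typing conditions $\mu'(s'(e))=d_0(\lambda'(e))$ and $\mu'(t'(e))=d_1(\lambda'(e))$ hold for transitions in $E$ by assumption on $A$, and for transitions in $E_\bot$ by construction, since $\mu'(q_{\bot,v})=v=d_1(a)$.

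Completeness then follows by a direct case check. For every $v\in V$, either some $q\in I\subseteq I'$ has $\mu(q)=v$, or the state $q_{i,v}\in I'$ was added and has $\mu'(q_{i,v})=v$. Now let $q\in Q'$ and $a\in\Sigma$ with $d_0(a)=\mu'(q)$. Either $A$ already has a transition from $q$ labeled $a$, or $q$ is a new state or a state of $Q$ lacking such a transition; in the latter cases $(q,a,q_{\bot,d_1(a)})\in E_\bot$. Either way there is an outgoing $a$-labeled transition from $q$ in $A'$.

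For $L(A')=L(A)$, the inclusion $L(A)\subseteq L(A')$ holds because $A$ is a subautomaton of $A'$: we have $Q\subseteq Q'$, $I\subseteq I'$, $F=F'$, $E\subseteq E'$, and the new structure restricts to the old. Hence every accepting path of $A$ is an accepting path of $A'$ with the same label. The converse inclusion is exactly Lemma~\ref{le:completion_coaccessible}. If that lemma must be justified, the argument runs as follows. No transition leaves $Q'\setminus Q$ into $Q$: from a $q_{\bot,v}$ or a $q_{i,v}$, all outgoing transitions lie in $E_\bot$ and lead to sink states. Moreover $F'=F\subseteq Q$. Therefore any path ending in $F'$ stays within $Q$ and uses only transitions of $E$, since every transition of $E_\bot$ ends outside $Q$. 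So it is a path of $A$, its source lies in $I'\cap Q=I$, and it is accepting in $A$ with the same label.

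The main obstacle is not any deep step but making the completion definition precise. This means fixing the quantifier in $E_\bot$ (the condition should be $\neg\exists e\in E: s(e)=q\wedge\lambda(e)=a$) and adding the type condition $d_0(a)=\mu'(q)$. Once that is done, the closure property that $Q'\setminus Q$ has no edges back into $Q$ carries the whole argument.
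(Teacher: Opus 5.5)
Your proof is correct and follows the paper's argument: $L(A)\subseteq L(A')$ because every accepting path of $A$ stays accepting in $A'$, and the converse comes from Lemma~\ref{le:completion\_coaccessible}. You also do things the paper leaves implicit: you fix the definition of $E_\bot$ (the misplaced quantifier and the missing typing condition $d_0(a)=\mu'(q)$), check that $A'$ is actually complete, and justify the lemma by observing that no transition leads from $Q'\setminus Q$ back into $Q$.
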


Now that we have complete automata, we can complement them to obtain complements of languages.
For a language $X$ over $(V, \Sigma)$, we denote $X^C = (V, \Sigma)^* \setminus X$ the complement of $X$.

\begin{definition}
 Let $A = (Q, I, F, E, s, t, \mu, \lambda)$ be a complete graph automaton.
 The complementation of $A$ is $A^C = (Q, I, Q \setminus F, E, s, t, \mu, \lambda)$.
\end{definition}

\begin{restatable}{theorem}{complement}
  $L(A^C) = L(A)^C$.
\end{restatable}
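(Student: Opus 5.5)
The plan is to reduce the statement to the fact that, in a complete \emph{and deterministic} automaton, every morphism of $(V, \Sigma)^*$ is the label of exactly one path starting in an initial state. Acceptance in $A$ and in $A^C$ is then decided by one and the same path, and the two conditions $\tgt(\pi)\in F$ and $\tgt(\pi)\in Q\setminus F$ are complementary.

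I read the statement with $A$ deterministic as well as complete. Completeness alone does not suffice. If $A$ has two initial states $q, q'$ of type $v$ with $q\in F$ and $q'\notin F$, then $\id_v$ lies in both $L(A)$ and $L(A^C)$. The hypothesis costs nothing for languages: starting from a regular $X$, first take $2^A$ (deterministic, same language, by the lemma on the subset automaton), then its completion. The completion adds an initial state $q_{i,v}$ only for types $v$ that have no initial state yet, and adds an edge out of $q$ labelled $a$ only when no such edge exists. So determinism is preserved, and by the completion theorem the language is unchanged. Both $A$ and $A^C$ share $Q, I, E, s, t, \mu, \lambda$, so the argument only has to handle paths in this common underlying structure.

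\textbf{Key step.} For every $x=(u,\omega,v)\in(V, \Sigma)^*$ there is exactly one path $\pi$ with $\src(\pi)\in I$ and $\lambda(\pi)=x$. I prove this by induction on $|\omega|$.
\begin{itemize}
\item If $\omega=\epsilon$, so $x=\id_u$: completeness gives an initial state of type $u$, and determinism says there is at most one. The constant path at that state is the unique such path.
\item If $x=ya$ with $a\in\Sigma$: by induction there is a unique path $\psi$ from $I$ labelled $y$. Its target $q$ has type $d_1(y)=d_0(a)$. Completeness gives an edge out of $q$ labelled $a$, and determinism makes it unique. Appending it yields a path labelled $x$.
\item Uniqueness in the inductive case: any path from $I$ labelled $x$ splits as a path labelled $y$ followed by one edge labelled $a$. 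The prefix must be $\psi$ by induction, and the last edge is then forced by determinism.
\end{itemize}

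\textbf{Conclusion.} Let $\pi_x$ be this unique path for $x$. Then $x\in L(A)$ iff $\tgt(\pi_x)\in F$, and $x\in L(A^C)$ iff $\tgt(\pi_x)\in Q\setminus F$. These are mutually exclusive and exhaustive, so $L(A^C)=(V, \Sigma)^*\setminus L(A)=L(A)^C$.

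The only delicate point is the one flagged above. The determinism hypothesis, including the ``one initial state per type'' clause, is exactly what makes the existence-and-uniqueness lemma work. Everything else is routine induction.
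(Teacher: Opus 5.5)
Your proof is correct and is essentially the paper's argument. Both rest on the fact that a complete deterministic automaton has exactly one path from an initial state labelled by each morphism $x$, so membership of $x$ in $L(A)$ or in $L(A^C)$ is decided by whether the target of that single path lies in $F$ or in $Q\setminus F$. You additionally prove this unique-run fact by induction and make the determinism hypothesis explicit (including at most one initial state per type); the paper's proof invokes it, but its definition of $A^C$ omits it, and your example with two initial states of the same type correctly shows it cannot be dropped.
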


\section{Minimization}
\label{se:minimization}

We show that for any automaton $A$ on $(V, \Sigma)$,
$\MN(L(A))$ is the smallest deterministic automaton that recognizes $L(A)$.

\begin{definition}
  A \emph{morphism} $\phi$ between two deterministic automata
  $A_i = (Q_i, I_i, F_i, E_i, s_i, t_i,$ $\lambda_i, \mu_i)$ on $(V, \Sigma)$
  is $\phi : Q_1 \to Q_2$ such that
  \begin{itemize}
  \item $q \in I_1\implies \phi(q) \in I_2$ and $q \in F_1\implies \phi(q) \in F_2$,
  \item $\mu_1(q) = \mu_2(\phi(q))$,
  \item for all $e$ with $s_1(e) = q, t_1(e) = q', \lambda(e) = a$, there exists $e' \in E_2$ with $s_2(e') = \phi(q), t_2(e') = \phi(q'), \lambda(e') = a$.
  \end{itemize}
\end{definition}

\begin{lemma}
 Let $A_1$, $A_2$ such that $L(A_1) = L(A_2)$.
 There exists a morphism $\phi$ from $A_1$ to $A_2$.
\end{lemma}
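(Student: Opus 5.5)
The statement as written cannot hold for two arbitrary automata, since a morphism must map accepting states to accepting states and preserve transitions. I would prove it in the form used for minimization: $A_1$ is a deterministic automaton that is \emph{trim} (every state is reachable from $I_1$ and co-reachable to $F_1$), and $A_2=\MN(X)$ with $X=L(A_1)=L(A_2)$. Trimming a deterministic automaton preserves both determinism and the language, so this is no real loss for the purpose of Sect.~\ref{se:minimization}. Write $A_1=(Q_1,I_1,F_1,E_1,s_1,t_1,\mu_1,\lambda_1)$, and recall the sets $L_{\to q}$ and $L_{q\to}$ from the proof of Lem.~\ref{le:reg->suff}.

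\textbf{Key observation.} In a deterministic automaton every morphism $x$ labels at most one path starting in $I_1$. The initial state is forced by its type $d_0(x)$, and each subsequent transition is forced by its label. Hence the computation in the proof of Lem.~\ref{le:reg->suff} sharpens to $x^{-1}X=L_{q\to}$, where $q$ is the unique state with $x\in L_{\to q}$. Consequently $x,y\in L_{\to q}$ implies $x\sim_X y$.

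\textbf{Definition of $\phi$.} I define $\phi(q)=[x]$ for any $x\in L_{\to q}$. The set $L_{\to q}$ is nonempty because $q$ is accessible, and $\phi$ is well defined by the key observation. Moreover $x^{-1}X=L_{q\to}\neq\emptyset$ because $q$ is co-accessible, so $[x]$ is indeed a state of $\MN(X)$; this is exactly where trimness is needed.

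\textbf{Verification of the morphism conditions.}
\begin{itemize}
\item If $q\in I_1$, take $x=\id_{\mu_1(q)}\in V\cap d_0(X)$. The vertex $\mu_1(q)$ lies in $d_0(X)$ by co-accessibility, so $\phi(q)\in I_2$.
\item If $q\in F_1$ and $x\in L_{\to q}$, then $x\in X$, so $\phi(q)\in F_2$.
\item Types are preserved: $\mu_2([x])=d_1(x)=\mu_1(q)$.
\item Given a transition $e$ from $q$ to $q'$ labelled $a$, and $x\in L_{\to q}$, we have $xa\in L_{\to q'}$. Hence $([x],a,[xa])$ is an edge of $\MN(X)$ from $\phi(q)$ to $\phi(q')$. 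This edge exists by definition of $E$ in $\MN(X)$ and is well defined by Lem.~\ref{le:mn}.
\end{itemize}

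\textbf{Main obstacle.} The only delicate points are the well-definedness of $\phi$ and the fact that its values are genuine states of $\MN(X)$. These rely on determinism, to get $x^{-1}X=L_{q\to}$ exactly rather than merely a union, and on trimness. I would therefore state the lemma explicitly with these hypotheses before using it to conclude that $|Q_2|\le|Q_1|$ for trim $A_1$. That last step needs surjectivity of $\phi$, which follows because every class $[x]$ with $x^{-1}X\ne\emptyset$ is reached by following $x$ in $A_1$.
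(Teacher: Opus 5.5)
The paper gives no proof of this lemma. You are right that it is false as stated for arbitrary $A_1,A_2$, and your restricted version is proved correctly. In that version $A_1$ is deterministic and trim and $A_2=\MN(X)$. The ingredients all check out:
\begin{itemize}
\item determinism gives $x^{-1}X=L_{q\to}$ exactly, which makes $\phi$ well defined;
\item accessibility makes $\phi$ total;
\item co-accessibility makes $[x]$ and $[xa]$ genuine states of $\MN(X)$, and puts $\mu_1(q)$ in $d_0(X)$ for $q\in I_1$;
\item surjectivity follows because every $x$ with $x^{-1}X\neq\emptyset$ is a prefix of an accepted morphism.
\end{itemize}
Co-accessibility of $q'$ also quietly avoids a defect in the paper's definition of $\MN(X)$. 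As written, its edge set can contain $([x],a,[xa])$ with $(xa)^{-1}X=\emptyset$, whose target is not a state; your edges never have this problem.

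Your morphism points in the opposite direction from the paper's use of the lemma. The subsequent theorem takes a morphism $\MN(L(A))\to A$ and proves it injective. Existence in that direction fails even for trim deterministic $A$.
\begin{itemize}
\item Take the one-vertex graph $(*,\{a\})$ and $X=(*,\{a\})^*$. Then $\MN(X)$ is a single initial, accepting state with an $a$-loop.
\item The deterministic trim automaton with states $p_0,p_1$, $I=\{p_0\}$, $F=\{p_0,p_1\}$, and $a$-transitions $p_0\to p_1\to p_0$ also recognizes $X$.
\item A morphism $\MN(X)\to A$ would have to send the unique state to $p_0$ and the loop to an $a$-loop at $p_0$, and there is none.
\end{itemize}
So the paper's injectivity theorem is true but vacuous in such cases and cannot give minimality. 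Your surjective morphism from a trim deterministic automaton onto $\MN(X)$ is the standard argument that actually shows $\MN(X)$ has at most $|Q_1|$ states. It is the correct repair of both this lemma and the minimization claim.
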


\begin{theorem}
  Let $A$ be a deterministic automaton on $(V, \Sigma)$.
  Any morphism from\/ $\MN(L(A))$ to $A$ is injective.
\end{theorem}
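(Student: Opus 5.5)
Let $X=L(A)$ and let $\phi$ be a morphism from $\MN(X)$ to $A$. Suppose $\phi([x])=\phi([y])=p$ for two states $[x],[y]$ of $\MN(X)$; we will show $x\sim_X y$, i.e.\ $x^{-1}X=y^{-1}X$. The approach is to attach to every state $[x]$ of $\MN(X)$ a path in $A$ labeled by $x$, show that $\phi$ sends $[x]$ to the endpoint of that path, and then read off $x^{-1}X$ from that endpoint. Since $A$ is deterministic, this endpoint does not depend on which path we choose.

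First, we use the structure of $\MN(X)$. Every state $[x]$ is reached from the initial state $[d_0(x)]$, which lies in $I$ because $x^{-1}X\neq\emptyset$ forces $d_0(x)\in d_0(X)$. It is reached along the path that reads the letters of $x$ one by one; all intermediate prefixes $x'$ of $x$ satisfy $x'^{-1}X\neq\emptyset$, so they are states. Applying $\phi$ and the three morphism conditions, we transport this path to a path in $A$ labeled $x$. This path starts at $\phi([d_0(x)])\in I_A$, which has type $d_0(x)$, and ends at $\phi([x])$.

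Next, we use determinism of $A$. There is at most one initial state of type $d_0(x)$ and at most one transition per label from each state. Hence the path in $A$ labeled $x$ from an initial state is unique, and its endpoint $\delta(x)$ is well defined and equal to $\phi([x])$. For a deterministic automaton, $u\in x^{-1}X$ iff $d_0(u)=d_1(x)$ and the unique run of $u$ from $\delta(x)$ ends in $F_A$. The ``if'' direction concatenates the two runs. For ``only if'', any accepting run of $xu$ must, by uniqueness, pass through $\delta(x)$ after reading $x$. Thus $x^{-1}X$ is determined by $\delta(x)$ alone: it equals $L_{\delta(x)\to}$ in the notation of the proof of Lem.~\ref{le:reg->suff}.

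Combining the two steps, $\phi([x])=\phi([y])$ gives $\delta(x)=\delta(y)$, so $x^{-1}X=L_{\delta(x)\to}=y^{-1}X$, and hence $[x]=[y]$. Types agree automatically, because $\mu$ is preserved by $\phi$. The main obstacle is the transport step. The morphism definition only guarantees the existence of an edge $e'$ with $\phi(q)\to\phi(q')$ labeled $a$, and we must make sure the image path really starts at an initial state of $A$. This holds because $\phi$ preserves initial states, and every state of $\MN(X)$ is reachable from $I$ by construction, so no unreachable state needs separate treatment.
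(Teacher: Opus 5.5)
Your proof is correct and uses the same idea as the paper's: in the deterministic automaton $A$, the residual $x^{-1}L(A)$ depends only on the state reached by reading $x$, and $\phi([x])=\phi([y])$ forces $x$ and $y$ to reach the same state. The paper's proof is a single line, $x^{-1}L(A)=\lambda'(\pi)^{-1}L(A)=y^{-1}L(A)$ for an arbitrary path $\pi$ from an initial state to $\phi([x])$. Your step of transporting the path of $x$ along $\phi$ supplies exactly the justification that line leaves implicit.
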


\begin{proof}
 Let $\MN(L(A)) = (Q, I, F, E, s, t, \lambda, \mu)$ and $A = (Q', I', F', E', s', t', \lambda', \mu')$.
 Let $\phi$ be a morphism from $\MN(L(A))$ to $A$, $[x], [y] \in Q$ with $\phi([x]) = \phi([y])$ and $\pi$ a path in $A$ from an initial state to $\phi([x])$ in $A$.
 Then, $x^{-1}L(A) = \lambda'(\pi)^{-1}L(A) = y^{-1}L(A)$ and $[x] = [y]$.
 \qed
\end{proof}

If $A$ is finite, then the above implies that the number of states of $\MN(L(A))$ is minimal
among deterministic automata recognizing $L(A)$.

\section{Automata on Presimplicial Alphabets}

In this final section we give an outlook on a possible extension of our setting.
It is motivated by Ex.~\ref{ex:sta}, because
the notion of ST-automaton most commonly in use is not in fact based on the graph alphabet in Fig.~\ref{fig:stalph}
but rather on a version which includes \emph{compositions} of starters with starters and of terminators with terminators,
see~\cite{DBLP:journals/corr/abs-2601.17537} for a discussion.
In Fig.~\ref{fig:stalph} that would induce new edges between $\emptyset$ and $\loset{a\\b}$,
labeled $\loset{a\ibullet\\b\ibullet}$ for the starter
and $\loset{\ibullet a\\\ibullet b}$ for the terminator,
and similarly between $\emptyset$ and each of $\loset{a\\a}$, $\loset{b\\a}$, and $\loset{b\\b}$.

Now this structure $(V, \Sigma)$ should not be understood as a graph,
because then $(V, \Sigma)^*$ is \emph{not free}:
some compositions in $(V, \Sigma)^*$ are not freely defined but induced from $(V, \Sigma)$.
Similar concerns would arise with the type-checking example of Fig.~\ref{fig:types}.

Recall that a \emph{presimplicial set}~\cite{books/Dold95}
is a presheaf on the category $\Delta$ of totally ordered finite sets and monotone injections.
That is, $\Gamma\in \Set^{\Delta^\op}$ consists of simplices of different dimensions
(vertices, edges, triangles, etc.) together with face maps which determine how these are glued.

There is a functor $N: \Cat\to \Set^{\Delta^\op}$ which maps (small) categories to their \emph{nerves}:
objects are mapped to vertices, morphisms to edges, pairs of composable morphisms to triangles, etc.
Using the embeddings $\Delta\hookrightarrow \Cat$ and $\Delta\hookrightarrow \Set^{\Delta^\op}$ (Yoneda),
one sees that $N$ has a left adjoint $F: \Set^{\Delta^\op}\to \Cat$, often called \emph{realization}.
Using an analogy with what we did in Sect.~\ref{se:graut},
we prefer to think of $F$ as yielding the \emph{free} category on a presimplicial set $\Gamma$:
morphisms of $F(\Gamma)$ are sequences of compatible edges in $\Gamma$,
subject to an equivalence relation $\sim_\Gamma$
which identifies concatenations with their ``result'' in $\Gamma$ when it exists.

\begin{figure}[tbp]
  \centering
  \begin{tikzpicture}
    \begin{scope}[x=1.7cm]
      \path[fill=gray!50] (0,0) -- (1,0) -- (1,1) -- (0,0);
      \node[state] (0) at (0,0) {};
      \node[state] (1) at (1,0) {};
      \node[state] (2) at (1,1) {};
      \node[state] (3) at (2,1) {};
      \path (0) edge node[swap] {$a$} (1);
      \path (1) edge node[swap] {$b$} (2);
      \path (2) edge node {$c$} (3);
      \path (0) edge node {$d$} (2);
      \path (1) edge node[swap] {$e$} (3);
    \end{scope}
  \end{tikzpicture}
  \caption{A simple presimplicial alphabet.}
  \label{fig:presimp}
\end{figure}
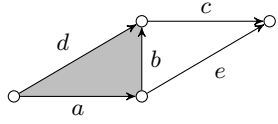

We show a simple example in Fig.~\ref{fig:presimp}.
The equivalence relation induced on $(V, \Sigma)^*$ identifies
$a b\sim_\Gamma d$ and $a b c\sim_\Gamma d c$.
Also note that one-dimensional presimplicial sets are graphs,
and for such the constructions given here degenerate to the ones in Sect.~\ref{se:graut}.

Now let $\Sigma$ be a presimplicial set which will act as our alphabet.
An \emph{automaton} $A$ on $\Sigma$ is defined as in Def.~\ref{de:aut},
with states labeled by points and transitions labeled by edges of $\Sigma$;
but now we have information on compositions in $\Sigma$ which reflect the definition of the language of $A$:
\begin{equation*}
  L(A)=\{x\in F(\Sigma)\mid \exists \pi \text{ accepting path in } A: \lambda(\pi)\sim_\Sigma x\}
\end{equation*}

We conjecture that the proofs in Sect.~\ref{se:kleene} also go through in this generalized setting
and hence that automata on presimplicial alphabets admit a Kleene theorem.
Whether or not the same holds for a Myhill-Nerode theorem and determinization is more doubtful,
given that, for example, ST-automata are related to higher-dimensional automata~\cite{DBLP:journals/tcs/AmraneBFZ25}
which are not generally determinizable~\cite{DBLP:journals/fuin/FahrenbergZ24}.


\section{Conclusion and Further Work}

We have developed the beginnings of an automata theory for automata on graph alphabets:
alphabets which constrain concatenation in that two strings may only be concatenated
if the end vertex of the first is the start vertex of the second.
We have shown Kleene and Myhill-Nerode theorems for such automata,
and further results on determinizability and complementability.

An interesting next step would be to look into MSO-like logics for our automata
and try to prove a Büchi-Elgot-\!Trakhtenbrot theorem.
Also notions of aperiodicity and a corresponding first-order logic are something that one might want to develop.

We are more interested, however, in extensions of the basic formalism.
We have already mentioned automata on presimplicial sets as a possible extension,
motivated by applications in non-interleaving concurrency theory.
Another extension would be to consider graphs with a refinement order on vertices,
so that any action available on a vertex $v$ would also be available on any other of which $v$ is a refinement.
That would allow a more faithful modeling of our first example,
where one should consider that unsafe states are refinements of safe states.

Finally, motivated by our second example,
one might want to consider graph alphabets where the graph has a monadic structure,
or even a Cartesian closed one,
so that product types or higher-order function types would be naturally available.
The monadic structure is also present in ST-automata, even though we haven't used it here;
but for function types the higher-order automata of~\cite{DBLP:conf/lics/Mellies17} seem better suited.


\bibliographystyle{plain}
\bibliography{mybib}

\newpage

\section*{Appendix}

\section*{Proofs of Sect.~\ref{se:kleene}}
  \kleeneunion*
 \begin{proof}
  Let $x$ in $L(A_1) \cup L(A_2)$.
  Wlog, let us suppose that $x \in L(A_1)$
  There exists an accepting path $\pi$ in $A_1$ with $\lambda_1(\pi) = x$.
  By construction, $\pi$ is an accepting path in $A_1 \cup A_2$ with $\lambda(\pi) = \lambda_1(\pi) = x$.
  Hence $x \in L(A_1 \cup A_2)$.

  Reciprocally, let $x \in L(A_1 \cup A_2)$.
  There exists an accepting path $\pi = (q_0, e_1, q_1, \dotsc, e_n, q_n)$ in $A_1 \cup A_2$ such that $\lambda(\pi) = x$.
  Wlog, let us suppose that $q_0 \in I_1$.
  Then by a trivial induction, for all $1 \le i \le n$ $q_i \in Q_1$.
  Thus $\pi$ is an accepting path in $A_1$ and $x \in L(A_1) \cup L(A_2)$.
  \qed
\end{proof}

\kleeneconcat*

\begin{proof}
 Let $x \in L(A_1) L(A_2)$.
 There exist $x_1 \in L(A_1), x_2\in L(A_2)$ and $\pi_1$, $\pi_2$ such that
 $x=x_1 x_2$,
 $\lambda_1(\pi_1) = x_1$, and
 $\lambda_2(\pi_2) = x_2$.
 Then, denoting $e = (\tgt(\pi_1), \mu(\tgt(\pi_1)), \src(\pi_2))$, $\pi = \pi_1 e \pi_2$ is a path in $A_1A_2$ with $\lambda(\pi)=x$, $\src(\pi) \in I_1$, $\tgt(\pi) \in F_2$.
 Thus $x \in L(A_1A_2)$.

 Reciprocally, let $x \in L(A_1 A_2)$.
 There exists a path $\pi=(q_0, e_1, q_1,\dotsc, e_n, q_n)$ with $\lambda(\pi) = x$ such that $q_0 \in I_1, q_n \in F_2, \mu(q_0) = d_0(x)$, and $\mu(q_1) = d_1(x)$.
 There is $i$ such that $e_i \in E_\tau$, and for $j \neq i$, $e_j \in E_1$ iff $i < j$.
 Let $\pi_1 = (q_0, e_1, q_1,\dotsc, q_{i-1})$ and $\pi_2 = (q_i, e_{i+1}, \dotsc, q_n)$.
 We have that $q_0 \in I_1, q_{i-1} \in F_1, q_i \in I_2, q_n \in F_2$ and $\mu_1(q_{i-1}) = \mu(q_{i-1}) = \mu(q_i) = \mu_2(q_i)$.
 Since $\lambda_1(\pi_1) = \lambda(\pi_1)$, $\lambda_2(\pi_2) = \lambda(\pi_2)$ and $\lambda(\pi) = \lambda_1(\pi')\lambda_2(\pi'')$, there exists $x_1 \in L(A_1), x_2 \in L(A_2)$ such that  $x = x_1x_2$.
 Thus $x \in L(A_1) L(A_2)$.
 \qed
\end{proof}

\kleenestar*

\begin{proof}
  Let $x \in L(A^+)$.
  There exists $\pi = (q_0, e_1, q_1, \dotsc, e_n, q_n)$ with $\lambda'(\pi) = x$, $\src(\pi)\in I'$ and $\tgt(\pi')\in F'$.
  Let $\Tau = \{i \mid e_i \in E_\tau\}$ and denote $\pi = \pi_0 e_{i_1} \pi_1 \dotsc e_{i_\Tau} \pi_{[\Tau|}$.
  We have that $x = \lambda'(\pi) = \lambda(\pi_0) \dotsc \lambda(\pi_{|\Tau|})$.
  Further, for all $i \leq |\Tau|$, $\src(\pi_i) \in I, \tgt(\pi_i) \in F$
  and any transition in $\pi_i$ is an element of $E$.
  Thus $\pi_i$ is an accepting path in $A$, and $\lambda(\pi_i) \in L(A)$.
  Hence $x \in L(A)^+$.

  Reciprocally, let $x = x_1, \dotsc, x_n \in L(A)^+$ such that for all $i$, $x_i \in L(A)$.
  Let $\pi_i$ be an accepting path in $A$ such that $\lambda(\pi_i) = x_i$.
  We have that $\mu(\tgt(\pi_i)) = d_1(x_i) = d_0(x_{i+1}) = \mu(\src(\pi_{i+1}))$.
  Now, for all $i<n$, let $e'_i = (\tgt(\pi_i), \mu(\tgt(\pi_i)), \src(\pi_{i+1}))$.
  Then $e'_i \in E_\tau$ for all $i$, $\pi_i$ is a path in $A^+$,
  and the concatenation $\pi' = \pi_0 e'_0 \pi_1 \dotsc e'_{n-1} \pi_n$ is well defined as a path in $A^+$.
  Now, $\lambda(\pi') = x$, $\src(\pi') = \src(\pi_0) \in I'$ and $\tgt(\pi') = \tgt(\pi_{|\Tau|}) \in F'$.
  Thus, $x \in L(A^+)$.
  \qed
\end{proof}

\section*{Proofs of Sect.~\ref{se:mn}}

\universaltree*

\begin{proof}
  The inclusion $L(U) \subseteq (V, \Sigma)^*$ is trivial.
  For the other inclusion, let $x \in (V, \Sigma)^*$.
  We show by induction on the length of $x$ that there exists an accepting path $\pi$ in $U$
  such that $\lambda(\pi) = x$.

  If $x = \id_u$, then $\pi=(u)$ is accepting. 
  Then, if $x = x'a$. 
  By hypothesis, there is an accepting path $\pi'$ in $U$ such that $\lambda(\pi') = x'$. 
  By Lem.~\ref{le:unfolding_source}, $\tgt(\pi') = x'$.
  Let $\pi = \pi'\cdot (x', a, x)$.
  We have that $\lambda(\pi) = x$. 
  As $Q = F$, $\pi$ is accepting and $x \in L(U)$.
\end{proof}


\mnlang*

\begin{proof}
  Let $x \in L(\MN(X))$. 
  There exists a path $\pi$ with $\lambda(\pi)=x$ in $\MN(X)$, and let $\tgt(x) =  [y] \in F$.
  We have $x \sim_X y$.
  By Lem.~\ref{le:mn}, $d_1(x) = d_1(y)$, $x^{-1}X = y^{-1}X$ and as $\id_{d_1(y)} \in y^{-1}X$, $x\ \id_{d_1(y)} = x \in X$.

  Reciprocally, let $x \in X$.
  Let $x'$ be the biggest prefix of $x$ such that there is a path $\pi'$ in $\MN(X)$ with $\lambda(\pi') = x'$.
  Suppose that $x' \neq x$ and denote $x = x'ax''$.
  We have that $(x'a)^{-1}X \neq \emptyset$, as it contains $x''$.
  Then $[x'a] \in Q$ and $([x'], a, [x'a]) \in E$.
  $\pi = \pi'([x'],a,[x'a])$ is a path in $\MN(X)$ with $\lambda(\pi)$ a prefix of $x$ bigger than $x'$, a contradiction.
  Hence $x = x'$, and $\tgt(\pi') = [\lambda(\pi')] = [x] \in X = F$.
  Thus $x \in \MN(X)$.
  \qed
\end{proof}

\section*{Proofs of Sect.~\ref{se:determinism}}

\det*

\begin{proof}
  For the claim about $U$ (Def.~\ref{de:univ}), $I=V$, so for any $v\in V$,
  $|\{q\in I\mid \mu(q)=v\}|=|\{q\in I\mid q=v\}|=1$.
  For any $w\in Q$ and $a\in \Sigma$,
  $|\{e\in E\mid s(e)=x, \lambda(e)=a\}=|\{(x, a, x a)\}|=1$.
  Let $X\subseteq (V, \Sigma)^*$,
  then $U\rest{X}$, being a restriction of $U$, is also deterministic.

  For $\MN(X)$, $|\{q\in I\mid \mu(q)=v\}|\le 1$
  because $I$ is a quotient of the set of initial states of $U\rest{X}$.
  For any $[x]\in Q$ and $a\in \Sigma$,
  $|\{e\in E\mid s(e)=[x], \lambda(e)=a\}\le |\{([x], a, [x a])\}|=1$.
  \qed
\end{proof}

\propdet*

\begin{proof}
  Let $(R, a, S), (R, a, S') \in E'$.
  Let $\mathcal s \in S$.
  There is $e \in E, r \in R$, $s(e) = r$ and $t(e) = \mathcal s$.
  Then, $\mathcal s \in S'$ and $S \subseteq S'$.
  Symmetrically, $S' \subseteq S$ and $S = S'$.
  $2^A$ is then deterministic.

  If $A$ is finite, then $|Q'| \le 2^{|Q|}$ and $2^A$ is finite.

  Let $x \in L(2^A)$.
  There exists a path $\pi' = (Q_0, e'_1, Q_1,\dotsc, e'_n, Q_n)$ in $2^A$ that accepts $x$, with $Q_0 \subseteq I$ and $Q_n \cap F \neq \emptyset$.
  We show by induction that for all $i \le n$, for all $q_n \in Q_n$ there is a path $\pi = (q_0, e_1, q_1, \dotsc, q_n)$ such that for all $i \le n$, $q_i \in Q_i$ and $\lambda(e_i) = \lambda'(e'_i)$.
  The case $n = 0$ is trivial.
  If $n > 0$, then $\pi' = \psi' e'$ with $e' = (Q_{n-1}, a, Q_n)$.
  Let $q_n \in Q_n$. There is $q_{n-1} \in Q_{n-1}, e_n \in E$ with $s(e_n) = q_{n-1}$ and $t(e_n) = q_n$.
  By hypothesis there is a path $\psi =  (q_0, e_1, q_1, \dotsc, q_{n-1})$ such that for all $i < n-1$, $q_i \in Q_i$ and $\lambda(e_i) = \lambda'(e'_i)$.
  Then the path $\pi = \psi (q_{n-1}, e_n, q_n)$ exists in $A$ and is labelled by $x$.
  Choosing $q_n \in Q_n \cap F$, we have that $q_0 \in I$ and $\pi$ accepts $x$.

  Reciprocally, let $x \in L(A)$.
  There is a path $\pi = (q_0, e_1, q_1, \dotsc, q_n)$ with $q_0 \in I$, $q_n \in F$ and $\lambda(\pi) = x$.
  We show by induction on the length of $\pi$ that there is $\pi' = (Q_0, e'_1, Q_1, \dotsc, Q_n)$ in $2^A$ with $\lambda'(\pi') = x$ and for all $i \leq n$, $q_i \in Q_i$ and $Q_0 \in I'$.
  The case $n = 0$ is again trivial.
  If $n > 0$, then $\pi = \psi e$.
  By hypothesis there is a is a path $\psi' =  (Q_0, e'_1, Q_1, \dotsc, Q_{n-1})$ such that for all $i \le n-1$, $q_i \in Q_i$ and $\lambda(e_i) = \lambda'(e'_i)$ and $Q_0 \in I'$.
  As there is a transition $(q_{n-1}, a, q_n)$ in $E$, there exists $Q_n$ with $q_n \in Q_n$ and $e'_n = (Q_{n-1}, a, Q_n) \in E'$.
  Then $\pi' = \psi' e'_n$, with have that $\lambda'(\pi') = \lambda(\pi) = x$, $\src(\pi')\in I'$ and $\tgt'(\pi') \in F'$.
  Thus $x \in L(2^A)$.
  \qed
\end{proof}

\section*{Proofs of Sect.~\ref{se:complementation}}

\completion*

\begin{proof}
 Let $x \in L(A)$.
 There exists an accepting path $\pi$ in $A$ with $\lambda(\pi) = x$.
 Then $\pi$ is an accepting path in $A'$, and $x \in L(A')$.

 Reciprocally, let $x \in L(A')$.
 There exists an accepting path  $\pi$ in $A'$ with $\lambda'(\pi) = x$.
 According to Lem.~\ref{le:completion_coaccessible}, $\pi$ is an accepting path in $A$, and $\lambda(\pi) = x$.
 Hence $x \in L(A)$.
 \qed
\end{proof}

\complement*

\begin{proof}
  Let $x \in L(A)^C$.
  As $A$ is complete and deterministic, there exists a unique path $\pi$ in $A$ with $\lambda(\pi) = x$ starting in an initial state, and $t(\pi) \in Q \setminus F$.
  $\pi$ is also an accepting path in $A^C$ labelled with $x$.
  Hence, $x \in L(A^C)$.

  Let $x \in L(A^C)$.
  As $A^C$ is also complete and deterministic, there exists a unique path $\pi$ in $A^C$ with $\lambda(\pi) = x$ starting in an initial state, and $t(\pi) \in Q \setminus F$.
  $\pi$ is also the unique path in $A$ starting in an initial state labelled with $x$, and $t(\pi) \not \in F$.
  Hence, $x \in L(A)^C$.
  \qed
\end{proof}

\end{document}